\documentclass[pdflatex,sn-mathphys-num]{sn-jnl}% Math and Physical Sciences Numbered Reference Style
%%\documentclass[pdflatex,sn-mathphys-ay]{sn-jnl}% Math and Physical Sciences Author Year Reference Style
%%\documentclass[pdflatex,sn-aps]{sn-jnl}% American Physical Society (APS) Reference Style
%%\documentclass[pdflatex,sn-vancouver-num]{sn-jnl}% Vancouver Numbered Reference Style
%%\documentclass[pdflatex,sn-vancouver-ay]{sn-jnl}% Vancouver Author Year Reference Style
%%\documentclass[pdflatex,sn-apa]{sn-jnl}% APA Reference Style
%%\documentclass[pdflatex,sn-chicago]{sn-jnl}% Chicago-based Humanities Reference Style

%%%% Standard Packages
%%<additional latex packages if required can be included here>
\usepackage{float}
\usepackage{graphicx}%
\usepackage{multirow}%
\usepackage{amsmath,amssymb,amsfonts}%
\usepackage{amsthm}%
\usepackage{mathrsfs}%
\usepackage[title]{appendix}%
\usepackage{xcolor}%
\usepackage{textcomp}%
\usepackage{manyfoot}%
\usepackage{booktabs}%
\usepackage[linesnumbered,ruled]{algorithm2e}
\usepackage{algpseudocode}
\usepackage{listings}%
%%%%

%%%%%=============================================================================%%%%
%%%%  Remarks: This template is provided to aid authors with the preparation
%%%%  of original research articles intended for submission to journals published 
%%%%  by Springer Nature. The guidance has been prepared in partnership with 
%%%%  production teams to conform to Springer Nature technical requirements. 
%%%%  Editorial and presentation requirements differ among journal portfolios and 
%%%%  research disciplines. You may find sections in this template are irrelevant 
%%%%  to your work and are empowered to omit any such section if allowed by the 
%%%%  journal you intend to submit to. The submission guidelines and policies 
%%%%  of the journal take precedence. A detailed User Manual is available in the 
%%%%  template package for technical guidance.
%%%%%=============================================================================%%%%

%% as per the requirement new theorem styles can be included as shown below
\newtheorem{lem}{Lemma}
\theoremstyle{thmstyleone}%
\newtheorem{theorem}{Theorem}%  meant for continuous numbers
%%\newtheorem{theorem}{Theorem}[section]% meant for sectionwise numbers
%% optional argument [theorem] produces theorem numbering sequence instead of independent numbers for Proposition
% 
%%\newtheorem{proposition}{Proposition}% to get separate numbers for theorem and proposition etc.

\theoremstyle{thmstyletwo}%

\theoremstyle{thmstylethree}%

\raggedbottom
%%\unnumbered% uncomment this for unnumbered level heads

\begin{document}

\title[An average case efficient algorithm for solving two-variable linear Diophantine equations]{An average case efficient algorithm for solving two-variable linear Diophantine equations}

%%=============================================================%%
%% GivenName	-> \fnm{Joergen W.}
%% Particle	-> \spfx{van der} -> surname prefix
%% FamilyName	-> \sur{Ploeg}
%% Suffix	-> \sfx{IV}
%% \author*[1,2]{\fnm{Joergen W.} \spfx{van der} \sur{Ploeg} 
%%  \sfx{IV}}\email{iauthor@gmail.com}
%%=============================================================%%

\author*[1]{\fnm{Mayank} \sur{Deora}}\email{mbdeora@gmail.com}

\author[2]{\fnm{Pinakpani} \sur{Pal}}\email{pinak@isical.ac.in}

\affil*[1]{\orgdiv{Electronics and Communication Sciences Unit}, \orgname{Indian Statistical Institute}, \orgaddress{ \city{Kolkata}, \country{India}}}

\affil[2]{\orgdiv{Electronics and Communication Sciences Unit}, \orgname{Indian Statistical Institute}, \orgaddress{ \city{Kolkata},  \country{India}}}

%%==================================%%
%% Sample for unstructured abstract %%
%%==================================%%

\abstract{Solving two-variable linear Diophantine equations has applications in many cryptographic protocols such as RSA and Elliptic curve cryptography. The Extended Euclid’s algorithm is a well known algorithm to solve these equations. We revisit two algorithms to solve two-variable linear Diophantine equations. We write the iterative version of one of the revisited algorithms. For another, we do a fine-grained analysis of the number of recursive calls and arrive at a periodic function that represents the number of recursive calls. We find the period and use it to derive multiple upper bounds on the average number of recursive calls. The upper bounds conclude that the average number of recursive calls in the analyzed algorithm is a constant term improvement over the Extended Euclid's algorithm. We propose an iterative variant of the algorithm. We implement this algorithm and find that the average number of iterations by our algorithm is less than that of two existing algorithms. We compare the number of iterations in this algorithm and in Extended Euclid's algorithm. In the comparison, we find that for 100\% of the solvable instances of inputs, the iterations are less than that in Extended Euclid's algorithm.}

%%================================%%
%% Sample for structured abstract %%
%%================================%%

% 
\keywords{Diophantine Equation, Extended Euclid's Algorithm, Recursive Function Call, Periodicity Analysis, Fibonacci Numbers}

\maketitle
\section{Introduction}
 Diophantine equations are equations that have only integer solutions. Solving a two-variable linear Diophantine equation $ax+by=c$, requires determining integers $x$ and $y$ that satisfy this equation. In public key cryptography, especially in RSA and Elliptic curve cryptography, we need to repetitively compute the modulo multiplicative inverse \cite{stallings2006cryptography}. Finding modulo multiplicative inverse $a^{-1} \bmod b$ is essentially equivalent to solving the Diophantine equation $ax-by=1$ and finding $x \bmod b$.  \par In \cite{deora2023algorithm}, a recursive algorithm has been proposed to solve two-variable linear Diophantine equations referred to as the DEA-R algorithm in this paper. Euclidean algorithm and binary gcd algorithms are used for computing the greatest common divisor $(gcd)$ of two integers, which can be extended to solve two-variable linear Diophantine equations.  
 In \cite{sumarti2023method} and \cite{kumar2022alternative}, as a case study, these kinds of Diophantine equations are solved by reducing the equations to an instance of an integer linear programming (ILP) problem. ILP problem is an NP-complete problems. Since the two-variable linear Diophantine equations can be solved by Extended Euclid's algorithm \cite{cormen2009introduction} in polynomial time, so  reducing them to an NP-complete problem is not required. Aardal et. al. \cite{aardal2000solving} and Ramachandran \cite{ramachandran2006use} proposed algorithms for a system of linear Diophantine equations when there are upper bounds and lower bounds on variables. They also deal with an NP-Complete problem. 
 
 \par There are algorithms for solving a system of linear Diophantine equations in \cite{chou1982algorithms},  \cite{lazebnik1996systems}, \cite{bradley1971algorithms}, and \cite{esmaeili2001class}. Lazebnik \cite{lazebnik1996systems} provides a survey on a theorem for the solvability of a system of Diophantine equations, along with the construction of the solution. It does not specifically provide an algorithm. Bradley \cite{bradley1971algorithms} used Hermite normal form and Smith normal form of a matrix to develop an algorithm for solving a system of linear Diophantine equations. In \cite{chou1982algorithms}, there is an algorithm for computing solutions of a system, that needs solution of two-variable linear Diophantine equations in a subroutine.  For solving a system, Esmaeili et. al.  \cite{esmaeili2001class} use a method generalized to solve linear algebraic systems. We can reduce a two-variable linear Diophantine equation to an instance of a problem that is solvable by the algorithms for a system of these equations, but this reduction makes the problem more complex. 

 \par Lankford \cite{lankford1989non} and  Ajili et. al. \cite{ajili1995complete} proposed algorithms to solve for non-negative integer basis of the linear Diophantine equations.  Lankford \cite{lankford1989non} mentions that finding the time complexity of their proposed algorithm is a hard open problem. There is an algorithm in \cite{clausen1989efficient} that finds complete information of the set of all non-negative solutions of linear homogeneous or non-homogeneous Diophantine equations. In \cite{clausen1989efficient}, although there is no discussion on  the time complexity of the proposed algorithm, one can infer from its implementation that the time complexity is exponential in the size of the inputs. The algorithms proposed in \cite{clausen1989efficient}  and \cite{lankford1989non}  are not efficient for two-variable linear Diophantine equations in general, but they are applicable in certain other problems such as associative commutative unification.

 \par The number of recursions or steps in the average case in the Euclidean and binary gcd algorithm is the same ($O(\log b)$) \cite{knuth2014art}, where $b=min(a,b)$. The average number of recursions in Extended Euclid's algorithm is $\Theta(\log b)$ \cite{cormen2009introduction}, \cite{hensley1994number}, \cite{heilbronn1969average}, \cite{dixon1970number}.   In \cite{deora2023algorithm}, it was proved that the DEA-R algorithm will always incur fewer than or an equal number of recursive calls compared to Extended Euclid's algorithm.  The average number of recursive calls was analyzed to determine whether there is any asymptotic or constant-time improvement in the average number of recursive calls over Extended Euclid's algorithm \cite{cormen2009introduction}. 
 For finding the average number of recursive calls, there was an assumption that if we want to solve $ax+by=c$, then $c$ can take values between 1 to a natural number $N$. This approach is constrained by the fact that there is no limit on the value of $N$, which essentially means that $N$ can be $\infty$. As a result, no  improvement  was found in the average number of recursive calls over Extended Euclid's algorithm. For the DEA-R algorithm, there is no closed-form expression for the average number of recursive calls or a bound that is a function of the inputs to the algorithm. To find the closed-form expression, we need to compute the cardinality of some special integer sets, for which \cite{deora2023algorithm} provides no discussion. Till now, the DEA-R algorithm has not been compared with the Extended Euclid's algorithm on a large set of large-sized inputs.

 \par Our contributions in this work are as follows:
 \begin{enumerate}
     \item Implementing a recursive function has an inherent run-time overhead in terms of space while being implemented on a computer. For the implementation of the DEA-R algorithm, we develop an iterative version (DEA-I algorithm) of the recursive algorithm (DEA-R algorithm) in Section \ref{secDEAI}. We implement it on a large number of inputs. Each input is 4096 bits in size. 
     \item In Section \ref{secPNRC} we define a function $R_{a,b}:c \to \mathbb{N}$, which maps $c$ to the number of recursive function calls to solve the Diophantine equation ($ax+by=c$) using the DEA-R algorithm. We propose a theorem (theorem \ref{theo10}) to prove that the function $R_{a,b}$ is periodic. In theorem \ref{theo10} we prove that the fundamental period of $R_{a,b}$ is the lowest common multiple of all the intermediate integers (except the first integer) that are passed in as arguments to subsequent recursive calls, while running Euclid's gcd algorithm. 
     \item In Section \ref{secACAD}, we derive various upper bounds on the average number of recursive calls by the DEA-R algorithm.  We find that DEA-R algorithm does a constant improvement in number of recursive calls over Extended Euclid's algorithm in the average case.
    
      \item In Section \ref{secIR}, there is an implementation and comparison of the DEA-I algorithm with two iterative versions of the Extended Euclid's algorithm on a large set of large-sized random inputs. We find that the DEA-I algorithm outperforms in terms of the average number of iterations. We also find that for 100\% of the solvable input instances ($a,b,c$), DEA-I algorithm is better than Extended Euclid's algorithm. In another implementation, we verify that there is constant-term improvement in the average number of iterations (recursions) in DEA-I (DEA-R) algorithm.
 
 \end{enumerate}

  \par The subsequent sections of this paper are organized as follows. In Section \ref{secNot}, we introduce mathematical notations used throughout the paper.  Section \ref{secRAlgo} provides an overview of the DEA-R and EEA-R algorithms and the improvement of the DEA-R over the EEA-R algorithm. EEA-R algorithm is the recursive Extended Euclid's algorithm, which is given in \cite{cormen2009introduction}. In Section 3, we reduce a recursive version of the EEA-R algorithm to an iterative version. We refer to this algorithm as EEA-I, which is used in comparison in Section \ref{secIR}. Section \ref{secDEAI} presents iterative version (DEA-I algorithm) of the recursive algorithm (DEA-R algorithm). The proof of periodicity of the number of recursive function calls in DEA-R is presented in Section \ref{secPNRC}.
  Section \ref{secACAD} presents the average case analysis of DEA-R, along with a theorem that is applicable in the analysis of the algorithm. Section \ref{secIR} reports the implementation and results of comparing the DEA-I with the extended Euclid's algorithm. In Section \ref{secCon}, we conclude the paper and discuss possible future works.

\section{Notations}\label{secNot}

 Throughout this paper, we write recursive function call or recursive call to refer to a call to a recursive function. Both the terms represent the same kind of function here, because the function (in DEA-R or in EEA-R) that we refer to is recursive. For the convenience of the readers, in table \ref{tab:my_label} we list the various notations used.
\begin{table}[h]
\caption{Notations for variables and abbreviations in this paper}
    \centering

    \begin{tabular}{p{4cm}|p{7.6cm}}
         \textbf{Notation} &  \textbf{Description} \\\hline
         $a_1,a_2$ & First and second arguments respectively in the input ($a_1,a_2,c$) given to the DEA-R algorithm to solve $a_1x+a_2y=c$. It also refers to the first and second arguments, respectively, in the input ($a_1,a_2$) given to Extended Euclid's algorithm. \\\\
         $a_i\,  (1\le i \le k+1)$ & Extension of previous notations. With the assumption that Extended Euclid's algorithm incurs $k+1$ recursive calls on an input $(a_1,a_2)$, the input to the next recursive call is $(a_2,a_3)$. Similarly, the inputs to its subsequent recursive calls are $(a_3,a_4),(a_4,a_5),......,(a_k,a_{k+1}),(a_{k+1},0)$ respectively.
    \\\\ $q_i\,  (1\le i \le k)$ & Quotient of the division $\frac{a_i}{a_{i+1}}$, where $a_i$ and $a_{i+1}$ are defined in the previous notation.
    \\\\$a_i=q_ia_{i+1}+a_{i+2}\,$ $ (1\le i \le k)$ & Derived from the previous notations. $a_{i+2}$ is the remainder of the division $\frac{a_i}{a_{i+1}}$. By the design of Extended Euclid's algorithm $a_{k+1}$ is the $gcd$ of $a_1,a_2$ and $a_{k+2}=0$. \\\\
    $\mathbf{c_i}\,  (1\le i \le k)$ & A set of integers $c$, such that $c=a_i+Q_ia_{i+1}$, where $Q_i \in \mathbb{Z}$ i.e. all members $c$ of $c_i$ satisfy $c=a_i+Q_ia_{i+1}$. 
    \\\\
    $gcd(a,b)$ & Greatest common devisor ($gcd$) of $a$ and $b$ 
    \\\\ 
    $lcm(a_2,a_3,.....,a_{k+1})$ & lowest common multiple ($lcm$) of the integers $a_2,a_3,.....,a_{k+1}$\\\\
    
DEA-R & Diophantine Equations Algorithm - Recursive, proposed in \cite{deora2023algorithm}\\\\
EEA-R & Extended Euclid's algorithm (Recursive) given in \cite{cormen2009introduction}
\\\hline
 \end{tabular}
    
    \label{tab:my_label}
\end{table}
 
\par We denote the solution of Diophantine equation $ax+by=c$ by $(x,y)$ and general solution by $(x+m\frac{b}{gcd(a,b)}, y-m\frac{a}{gcd(a,b)})$ for any integer variable $m$. It is well known that, if a linear Diophantine equation is solvable, it has infinite solutions. In the general solution, $m$ varies and thus the general solution represents all the solutions of the Diophantine equation.

\section{Revisited algorithms}\label{secRAlgo}
Here, we give an overview of the  DEA-R algorithm and the EEA-R algorithm. We also revisit the theoretical comparison of DEA-R  with Extended Euclid's algorithm (EEA-R), which was given in \cite{deora2023algorithm}. At the end of this section, there is EEA-I algorithm which is an iterative variant of EEA-R algorithm.  
\subsection{DEA-R algorithm} The Algorithm 1 describes the DEA-R algorithm with a minor change in the print statement at line 3. 

\begin{algorithm}
\caption{\textbf{DEA-R} Algorithm to solve 2-variable Linear Diophantine equation \cite{deora2023algorithm}}
 \KwIn{ Integers $a,b,c$ ($a>b ,a\ne 0,b\ne 0,c\ne 0$)}  

\KwOut{Integer $y$}

\SetKwFunction{FMain}{f}
\SetKwProg{Fn}{Function}{:}{}

\Fn{\FMain{$a$, $b$,$c$}}{
   
\If{(b=0)} 
 {PRINT ("$a$ is the gcd of original inputs. $c$ is not a multiple of $g$")\;
 \textbf{exit}\;
}
\ElseIf {$((c-a) \bmod b=0)$} 
{
$y\gets (c-a)/ b$\;
\Return $y$\;
}
\Else 
{
$y\gets\left(c-f(b,a \bmod b,c)\times a)\right/b$\;
\Return $y$\;
 }
}\end{algorithm}
In DEA-R, the function $f$ takes $a,b,c$ as inputs, where $a>b,\,\, a\ne0, b\ne 0$ and it outputs $y$. After this  we can compute $x$, such that $ax+by=c$ as follows:
$$x=\frac{c-by}{a}$$

When $c=0$, the equation has trivial solutions. When $c\ne 0$, we summarize the DEA-R algorithm as follows:

\begin{enumerate}
    \item If $b=0$, no solution is possible. $b=0$ implies that $c$ is not a multiple of $g$ ($g$ is gcd of $a$ and $b$). So, $ax+by=c$ does not satisfy Bazout's lemma \cite{seroul2012programming}. Also, $a$ will be the gcd of inputs $(a,b)$ of the algorithm (input to the initial recursive function call). This output is similar to the output in Extended Euclid's algorithm \cite{cormen2009introduction}, where $b=0$ is the only base condition and at this step, gcd is produced in output. This output is different than the output given in the previous version of this algorithm.
    \item If $(c-a) \bmod b=0$, then output $(c-a)/b$ else go to step 3.
    \item Call function $f(b,a \bmod b,c)$ and output \newline $\frac{c-f(b,a \bmod b,c)\times a}{b}$
\end{enumerate}
    We give a brief note on the reason for the base condition $b=0$, implying no solution given in \cite{deora2023algorithm} also. Let $c$ be a multiple of $g$, and the arguments in the second-to-last recursive call be $a',b',c$. Then $(c-a')$ must be a multiple of $b'$. because $b'= a$ is same as $g$ and $a'$ and $c$ both are multiple of $g$. Thus, it is a contradiction because the DEA-R algorithm makes the last recursive call with arguments ($g,0,c$).
\subsection{EEA-R algorithm}

EEA-R algorithm \cite{cormen2009introduction} is a recursive variant of Extended Euclid's algorithm. The EEA-R algorithm solves for an integer solution of the following Diophantine equation.
$$ax+by=gcd(a,b)$$
To solve a Diophantine equation $ax'+by'=c$ for $x'$ and $y'$ with any value of $c$, we check whether $c$ is divisible by $gcd(a,b)$. If $c$ is not divisible then $ax'+by'=c$ is not solvable, otherwise we compute $x$ and $y$ as follows:
$$x'=\frac{xc}{gcd(a,b)}, y'=\frac{yc}{gcd(a,b)}$$

\begin{algorithm}
\caption{EEA-R algorithm to solve 2-variable linear Diophantine equations \cite{cormen2009introduction}}

\KwIn{Integers $a,b$ ($a>b$)}  
\KwOut{Integers $x,y$ such that $ax+by=gcd(a,b)$}

\SetKwFunction{FMain}{f}
\SetKwProg{Fn}{Function}{:}{}
\Fn{\FMain{$a$, $b$}}{
\If{$b=0$}
{\Return ($a,1,0$)\;} 
\Else {$(g',x_{old},y_{old})\gets f(b,a \bmod b)$\;
 $g \gets g'$\;
$x\gets y_{old}$\;
 $y \gets x_{old}-\lfloor {a/b}\rfloor y_{old} $\;
\Return ($g,x,y$)\;
}
    }
\end{algorithm}

\subsection{Comparison of DEA-R with EEA-R} 
In \cite{deora2023algorithm}, there is a comparison of the number of recursive function calls and time complexity of the DEA-R algorithm with that of EEA-R. In the worst case, the number of recursive function calls in both algorithms are equal. In DEA-R, if the size of $c$ in input is the same as the size of $max(a,b)$, $c$ may account equally for the time complexity of the DEA-R algorithm. So, when the size of $c$ in input is equal to the size of $max(a,b)$, in the worst case, asymptotic time complexity of DEA-R algorithm is same as that of Extended Euclid's algorithm. In \cite{deora2023algorithm}, there is no consideration for the size or bit-size of $c$ for comparison. So, the comparative analysis has an implicit assumption that the bit-size of $c$ is less than or equal to the maximum of the bit sizes of $a$ or $b$. For some values of $c$ (third parameter in input of DEA-R), the number of recursive calls is less than that in Extended Euclid's algorithm.

In DEA-R, there are more basic arithmetic operations (division, multiplication, subtraction, and comparison) in one iteration as compared to Extended Euclid's algorithm. Assume that $R_{EE}$ and $R_{A}$ denote the number of recursive function calls of Extended Euclid's algorithm and DEA-R, respectively. It is explained in \cite{deora2023algorithm}, that always $R_{EE} \ge R_A$. It is also given that the criteria for DEA-R being more efficient is that the difference between the number of recursive calls in both the algorithms must be greater than $\frac{1}{3}$ of the number of recursive calls in the Extended Euclid's algorithm. Thus, DEA-R will be more efficient than Extended Euclid's algorithm only when the following relation holds:

$$R_{EE}-R_{A}>\frac{1}{3}R_{EE}$$
or \begin{equation} \label{eq8}
   \frac{2}{3} R_{EE} >  R_A 
\end{equation}
i.e., the number of recursive calls in DEA-R must be less than two-thirds of that in extended Euclid's algorithm. In this paper, we claim that condition (\ref{eq8}) is applicable only when the bit size of $|c|\le max(|a|,|b|)$. By analysis of the different values of $c$ we find that the $c$ values that take the same number of recursive function calls for a fixed value of $a$ and $b$ are expressible as $c=a_i+Q_ia_{i+1}$, where $a_i$'s are explained in notations section. In section 4, we observe the number of recursive function calls as a function of the value of $c$. We see that this function is periodic, and we find the period as well.

\subsection{Iterative version of EEA-R} Extended Euclid's algorithm is a well-known algorithm to solve two-variable linear Diophantine equations. We develop an iterative version of Extended Euclid's algorithm from its recursive version given in the book \cite{cormen2009introduction} (referred to as EEA-R), and we implement it as a C program. We refer to the iterative version as EEA-I. We develop the EEA-I algorithm because recursion has a run-time overhead, and we want to compare the EEA-R and DEA-I algorithms. In section \ref{secIR}, there are results of a comparison between DEA-I and EEA-I. EEA-I has a function $f(a,b)$ which computes integers $x'$ and $y'$ such that $ax'+by'=gcd(a,b)$. If $c$ is divisible by $gcd(a,b)$, we compute $y$ as $y=(\frac{c}{gcd})\times y'$ and $x$ as $x=(\frac{c}{gcd})\times x'$ to solve for integral solutions of $ax+by=c$. To observe an equivalence between recursion in the EEA-R algorithm and iteration in the EEA-I algorithm, we can see that if the first while loop in the EEA-I algorithm incurs $k$ number of iterations, then the number of recursions by the EEA-R algorithm will be $k+1$. This is because line no. 10, 12, and 13 of the EEA-I algorithm are part of the last recursive call of the EEA-R algorithm.

\begin{algorithm}[H]
 \caption{EEA-I}\label{AlgEEA2}

\KwIn{ Integers $a,b,c$ ($a>b ,a\ne 0,b\ne 0,c\ne 0$)}  
\KwOut{Integers $x,y$ such that $ax+by=c$} 

%\textbf{Input}:  

%\State  $f(a,b)$
\SetKwFunction{FMain}{f}
\SetKwProg{Fn}{Function}{:}{}
\Fn{\FMain{$a$, $b$, $c$}}{

 $arraysize \gets 0$ \;
\While{$(b \ne 0)$}
{
$floorArray[arraysize]\gets floor(a/b)$\;
 $temp\gets a$\;
 $a\gets b$ \;
 $b \gets temp\bmod b$\;
$arraysize \gets arraysize+1$\;
}

 $gcd \gets a$\;
\If{$c \bmod gcd = 0$}{
$x \gets 1$\;
$y \gets 0$\;
\While{$(arraysize \ge 1)$}
{
$temp \gets x$\;
 $x \gets y$\;

$y \gets temp-floorArray[arraysize-1]*y$ \;
$arraysize \gets arraysize-1$\;
}

$y \gets (c/gcd)\times y$\;
$x \gets (c/gcd)\times x$\;
 \Return $(x,y)$\;
 }
\Else 
{
PRINT ("$c$ is not a multiple of $g$")\;
}
}

\end{algorithm}

In the next section, we propose an iterative version of the DEA-R algorithm (DEA-I algorithm).
%THE END
\section{DEA-I: an iterative version of DEA-R}\label{secDEAI} Here we summarize the DEA-I algorithm, which is an iterative version of DEA-R algorithm. Note that DEA-I is using the same logic as DEA-R. Here we give a short proof of the correctness of DEA-I.

\begin{algorithm}
\caption{\textbf{DEA-I} Iterative version of algorithm 1 to solve 2-variable Linear Diophantine equation}
\KwIn{Integers $a,b,c$ ($a>b ,a\ne 0,b\ne 0,c\ne 0$) }
\KwOut{Integer $y$}  

%\textbf{Input}: \SetKwFunction{FMain}{f}
\SetKwProg{Fn}{Function}{:}{}
\Fn{\FMain{$a$, $b$, $c$}}{

 $no\_solution\gets 0$\;
 $arraysize\gets 0$ \;
\While{($(c-a) \bmod b \ne 0)$}
{
$coefarray[arraysize]\gets a$\;
$a\gets b$ \;
$b \gets (coefarray[arraysize]) \bmod b$ \;
$arraysize\gets arraysize+1$ \;
\If{($b=0$)}{
 $no\_solution\gets 1$\;
 PRINT ("$a$ is the gcd of original inputs. $c$ is not a multiple of $g$")\;
\textbf{Break}\;
}
}
\If{($no\_solution=0$)}{
 $coefarray[arraysize]\gets a$\;
 $y\gets \frac{c-a}{b}$\;
\While{$(arraysize \ge 1)$}
{
 $y \gets \frac{c-y*coefarray[arraysize-1]}{coefarray[arraysize]}$\;
$arraysize \gets arraysize-1$\;

}
}
\Return $y$\;

}

\end{algorithm}

In DEA-R algorithm, the initial and each subsequent recursive call receives the input tuples $(a_i,a_{i+1},c)$. The notations, $a_i$ and $a_{i+1}$ are described in table \ref{tab:my_label} of notations.
In DEA-I, we have used an array ($coefarray$) to store the intermediate inputs ($a_i$'s). The loop termination condition for the first while loop, at line 4 of DEA-I is equivalent to the termination condition at line 6 of DEA-R. The first while loop of DEA-I has the following loop invariant:
$$
coefarray[i]=a_i
$$
At line 9, in the DEA-I algorithm, if $b=0$ is true,then Diophantine equation has no solution. The $y$ variable is used to store the value output by $f(a,b,c)$.  For a solvable Diophantine equation, if the recursion in the DEA-R algorithm runs for $k$ times, there will be $k$ iterations of the first while loop in the DEA-I algorithm. Assume that in an iteration of the while loop at line 18 of the DEA-I algorithm, the $arraysize$ variable has the value $i$ and the value stored in $y$
represents value returned by function $f(a_i,a_{i+1},c)$ of DEA-R algorithm. Thus $y$ variable at line 17 of the DEA-I algorithm is the same as the value of $f(a_k,a_{k+1},c)$, returned by line 8 in the DEA-R algorithm. Then $y$ variable inside the while loop at line 18 of the DEA-I algorithm satisfies the following loop invariant:
$$
y=\frac{c-f(a_{i+1},a_{i+2},c)\times a_i}{a_{i+1}}
$$
The second loop runs for $k$ times, which implies that the $y$ value returned by $f(a,b,c)$ in the DEA-R algorithm and the value returned by $f(a,b,c)$ in DEA-I are the same.
The DEA-R algorithm is a recursive algorithm. Recursive functions have a runtime overhead when we execute them on a computer in terms of space and time. To remove this runtime overhead, we develop the DEA-I algorithm. In the section on implementations, we evaluate the performance of the DEA-I or DEA-R algorithm and compare them against the widely used Extended Euclid's algorithm.  
\par The periodicity analysis and average case analysis of the number of recursive calls by DEA-R, given in subsequent sections, is also applicable to the number of iterations of the first while loop in DEA-I. We explain a mapping between the number of recursions and the number of iterations as follows. The recursive function in the DEA-R algorithm has the second base condition (at line 6), that is the same as the condition for the termination of the first while loop of DEA-I. The first base condition of the recursive function in the DEA-R algorithm and the condition of the if statement in line 9 of the DEA-I algorithm are the same. Thus, it is easy to observe that for solvable Diophantine equations, the number of recursive calls of the recursive function in DEA-R and the number of iterations of the first while loop in the DEA-I algorithm are equal. For the unsolvable Diophantine equations, the DEA-R algorithm will incur one more recursive call than the number of iterations taken by the first while loop of DEA-I. In the unsolvable case, the if block at line 9 in DEA-I is executed. We assume that the if block at line 9 in DEA-I is one iteration of the first while loop. Thus, on an input $(a,b,c)$, the number of iterations of the first while loop of DEA-I is equal to the number of recursions of the recursive function in the DEA-R algorithm. 

\par Therefore, the analysis on the number of recursive calls in DEA-R, which is given in sections \ref{secPNRC} and \ref{secACAD}, is also applicable to the number of iterations of the first while loop in DEA-I. 

\section{Periodicity in number of recursive calls in DEA-R}\label{secPNRC}

In this section, we analyse the number of recursive calls by function $f$ of the DEA algorithm. Note that, by number of recursive calls in DEA, we are referring to the number of iterations in DEA-I as well as the number of recursive calls in DEA-R.

\par First, we see the relation between the values of $c$ and the number of recursive calls by the function  $f$ of the DEA algorithm. Then, we define the number of recursive calls as a function of $c$ on fixed values of $a$ and $b$. Let $$c_i=a_i+Q_ia_{i+1}$$ here $1 \le i \le k$ and $a_{i+1}<a_i$, $Q_i$ is any integer (independent of index $i$, $c_i$, $a_i$ or $a_{i+1}$). Description of the notations, $a_i$ and $a_{i+1}$, is given in the notations section. In other words, we are saying that $c_i \in \mathbf{c_i}$ where $\mathbf{c_i}$ is the set of integers expressible as $a_i+Q_ia_{i+1}$ (described in the notations section also). For a $c$, which belongs to all the sets in $\{\mathbf{c_{j_1}},.....,\mathbf{c_{j_l}}\}$, where $j_1<.......<j_l$, DEA-R algorithm makes $j_1$ or $min\{j_1,j_2,......,j_l   \}$ recursive call/calls.
 The algorithm incurs $j_1$ recursive calls because the condition in line no. 6 of the  DEA-R algorithm evaluates to true in the $j_{1}^{th}$ recursive call. To understand this, consider the value, $c-a=c-a_{j_1}$, which is computed at line no. 6 in $j_1^{th}$ recursive call of the DEA-R algorithm. Since $c$ is in the set, $\mathbf{c_{j_{1}}}$, $$c-a_{j_1}=a_{j_1}+Q_{j_1}a_{j_1+1}-a_{j_1}=Q_{j_1}a_{j_1+1}$$
 Then, the condition at line no. 6 of DEA-R specifically checks whether $Q_{j_1}a_{j_1+1} \bmod a_{j_1+1}$ is equal to $0$ or not. This is equal to $0$ and consequently, the condition evaluates to true. Along with this, note that the $j_1^{th}$ recursive call is hit by the DEA-R algorithm, before any of the $i^{th}$ recursive call, where $i$  belongs to the set, $\{j_2,......,j_l\}$. Hence, the DEA-R algorithm executes its first return statement in the $j_1^{th}$ recursive call,  from line no. 8. Thus DEA-R algorithm will incur only $j_1$ recursive calls.  
 This observation on the execution of the DEA-R algorithm is explained in \cite{deora2023algorithm} also. 
We can infer from this observation that if the value of $c$ (third parameter) in the input belongs to all the sets $\mathbf{c_i}$ ($1 \le i\le k$), then the DEA-R algorithm will incur one recursive call. In the next section,  in Theorem \ref{theo11}, we prove that we can find such a value of $c$.  

\par We define a function, $R_{a,b}(c)$, that represents the number of recursive calls for the DEA-R algorithm to solve $ax+by=c$. If $c_i \in \mathbf{c_i}$ and $c_i \notin \mathbf{c_j}$ for any $j<i$,  then $R_{a,b}(c_i)=i$. In Theorem 1, we show that the function $R_{a,b}(c)$ is periodic and find its period. 
\par Before the theorem, we write some assumptions on the input variables of the DEA-R algorithm.  We assume that the equation $a_1x+a_2y=c$ is not solvable in integers. Integers $a_1,a_2,.....,a_{k+1}$ are observed during the execution of the DEA-R algorithm. Specifically, when we execute DEA-R algorithm on the inputs $(a_1,a_2,c)$, then inputs to the subsequent recursive calls in DEA-R algorithm are $(a_2,a_3,c),(a_3,a_4,c),\\.......,(a_k,a_{k+1},c),(a_{k+1},0,c)$. Note that, if we run Euclid's gcd algorithm on $a_1,a_2$, then in the first recursive call the input will be $(a_1,a_2)$, in the next recursive call it is $(a_2,a_3)$, then $(a_3,a_4)$ in the next and so on up to $(a_{k+1},0)$. This assumption is also given in the notations section. Thus $a_3,a_4,......, a_{k+1}$ is the sequence of remainders observed during the execution of Euclid's algorithm. 
\newline  
\begin{theorem} \label{theo10}

 Assume that $R_{a,b}(c)$ is a function from all the possible values of $c$ to the number of recursive function calls incurred by the DEA-R algorithm. The fundamental period of $R_{a,b}(c)$ will be $lcm(a_2,a_3,......,a_{k+1})$.

\end{theorem}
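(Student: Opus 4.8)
The plan is to first recast $R_{a,b}$ in purely arithmetic terms and then treat periodicity as a statement about the level sets of this function. By the observation recalled just before the theorem, for a solvable $c$ (equivalently $c\equiv 0\pmod g$, where $g=\gcd(a,b)=a_{k+1}$) the algorithm returns at the first index whose congruence is met, so
\[
R_{a,b}(c)=\min\{\,i\in\{1,\dots,k\}: c\in\mathbf{c_i}\,\}=\min\{\,i: c\equiv a_i \pmod{a_{i+1}}\,\}.
\]
This minimum is well defined on solvable $c$ because $a_{k+1}\mid a_k$ forces $\mathbf{c_k}=\{c:c\equiv 0\pmod{g}\}$, so every solvable $c$ lies in $\mathbf{c_k}$; and since every member of each $\mathbf{c_i}$ is divisible by $g$, no non-solvable $c$ lies in any $\mathbf{c_i}$, matching the convention that $R_{a,b}$ is undefined there. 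The structural point I would exploit is that $R_{a,b}(c)$ depends on $c$ only through the residues $c\bmod a_2,\dots,c\bmod a_{k+1}$.

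First I would show $L:=lcm(a_2,\dots,a_{k+1})$ is a period. Since each modulus $a_{m}$ ($2\le m\le k+1$) divides $L$, replacing $c$ by $c+L$ leaves every residue $c\bmod a_{m}$ unchanged, hence leaves membership in each $\mathbf{c_i}$ — and therefore the minimizing index — unchanged; as $g\mid L$, the point $c+L$ is again solvable. Thus $R_{a,b}(c+L)=R_{a,b}(c)$ for all solvable $c$, so $L$ is a period and the fundamental period $T$ divides $L$.

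Next I would prove $L\mid P$ for every period $P$, which together with $L$ being a period forces $T=L$. The crucial simplification is that one need not track how the level sets interact: fix any index $i\in\{1,\dots,k\}$ for which some $c_0$ satisfies $R_{a,b}(c_0)=i$. Then $c_0\in\mathbf{c_i}$, and since $P$ is a period, $R_{a,b}(c_0+P)=i$ as well, so $c_0+P\in\mathbf{c_i}$. Subtracting the congruences $c_0\equiv a_i$ and $c_0+P\equiv a_i \pmod{a_{i+1}}$ gives $a_{i+1}\mid P$. If every value $1,\dots,k$ is attained by $R_{a,b}$, this yields $a_{i+1}\mid P$ for all $i$, hence $L\mid P$, completing the argument.

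The hard part is exactly the hypothesis just used: that each value $i\in\{1,\dots,k\}$ is actually attained, equivalently $\mathbf{c_i}\setminus(\mathbf{c_1}\cup\cdots\cup\mathbf{c_{i-1}})\neq\emptyset$ — the class $a_i\bmod a_{i+1}$ is never entirely swallowed by the earlier classes. A naive density count does not settle this: the classes $\mathbf{c_j}$ have total density $\sum_{j<i}\gcd(a_{i+1},a_{j+1})/a_{j+1}$, which can exceed $1$ (for Fibonacci inputs the reciprocals of the remainders already sum to more than one), so non-covering cannot be read off from densities. Instead I would use the arithmetic of the remainder sequence: since $a_j=q_ja_{j+1}+a_{j+2}$ with $0<a_{j+2}<a_{j+1}$, we have $\mathbf{c_j}=\{c:c\equiv a_{j+2}\pmod{a_{j+1}}\}$ with $a_{j+2}\not\equiv 0$, so each earlier class forbids only one nonzero residue; because $a_{i+1}$ is the smallest modulus involved, one can then build, via the Chinese Remainder Theorem, a witness $c_0\equiv a_i\pmod{a_{i+1}}$ dodging all forbidden residues (already $c_0\equiv 0$ modulo the relevant moduli realizes the extreme case $i=k$). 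Making this construction uniform in $i$, despite the non-coprimality of $a_2,\dots,a_{k+1}$, is where I expect the real difficulty to lie, and it is the natural place to invoke the companion existence result (Theorem \ref{theo11}) producing a $c$ lying in every $\mathbf{c_i}$.
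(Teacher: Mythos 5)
Your skeleton is sound and matches the paper's: (i) $L=lcm(a_2,\dots,a_{k+1})$ is a period, because membership in each $\mathbf{c_i}$ depends only on $c \bmod a_{i+1}$ and every $a_{i+1}$ divides $L$; (ii) any period $P$ satisfies $a_{i+1}\mid P$ for every \emph{attained} value $i$, by subtracting the congruences at $c_0$ and $c_0+P$. Both steps are correct, and (i) is actually spelled out more carefully by you than by the paper. But the step you explicitly leave open --- that each value $i$ is attained, i.e. $\mathbf{c_i}\not\subseteq\mathbf{c_1}\cup\dots\cup\mathbf{c_{i-1}}$ for $i<k$ (the case $i=k$ is immaterial, since $a_{k+1}=\gcd(a_1,a_2)$ divides $a_2$ and hence already divides $lcm(a_2,\dots,a_k)$) --- is the entire content of the theorem, and your plan for closing it does not work. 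Invoking Theorem \ref{theo11} is unsuitable for two reasons: its hypothesis (all pairwise gcds of $a_1,\dots,a_{k+1}$ equal) fails for generic inputs, e.g. the paper's own example $(89,55)$; and its conclusion produces a $c$ lying in \emph{every} $\mathbf{c_i}$, which is a witness for $R_{a,b}(c)=1$, not for $R_{a,b}(c)=i$. The CRT ``dodge the forbidden residues'' construction is likewise blocked by exactly the non-coprimality of the moduli that you flag, so as written your argument only proves that the fundamental period \emph{divides} $L$, not that it equals $L$.

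The paper closes this gap with a one-line witness you did not find: take $c=a_{i+2}$, the $i$-th remainder itself. It lies in $\mathbf{c_i}$ because $a_{i+2}=a_i-q_ia_{i+1}$. If $a_{i+2}\in\mathbf{c_j}$ for some $j<i$, write $a_{i+2}=a_j+Q_ja_{j+1}$; since $0<a_{i+2}<a_{j+1}<a_j$, the integer $Q_j$ must be negative, so $a_{i+2}=a_j-(-Q_j)a_{j+1}$ with $0<a_{i+2}<a_{j+1}$ exhibits $a_{i+2}$ as \emph{the} remainder of $a_j$ on division by $a_{j+1}$, i.e. $a_{i+2}=a_{j+2}$, which is impossible because the remainder sequence $a_3>a_4>\dots>a_{k+1}$ is strictly decreasing. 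Plugging this lemma into your steps (i) and (ii) completes the proof; without it, or some substitute for it, the lower-bound half of your argument has no foundation.
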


\begin{proof}
\iffalse    
For an input $a,b,c$, if $c \in \mathbf{c_i}$ and $c \notin \mathbf{c_j}$, for all $j<i$, then DEA-R algorithm will incur $i$ recursive calls to solve $ax+by=c$.
So, value of the function at $c$, $R_{a,b}(c)=i$. It is also clear that $R_{a,b}(c+a_{i+1})$
\fi
Assume that the Diophantine equation $ax+by=c$ has integer solutions. Consider the computation at line no. 6 of DEA-R algorithm, given by 
$$
(c-a_i) \bmod a_{i+1}
$$
It must evaluate to 0 for at least one value of $i$. 
This is equivalent to the statement that, if the Diophantine equation $ax+by=c$ is solvable in integers, then $c$ must belong to at least one of the sets, $\mathbf{c_1},\mathbf{c_2},....,\mathbf{c_k}$. All the integers in a set $\mathbf{c_i}$ are expressible as $a_i+Q_ia_{i+1}$ for any integer $Q_i$. In any integer $c\in \mathbf{c_i}$, we have to add at least $a_{i+1}$, so that $c+a_{i+1}$ is in $\mathbf{c_i}$ . If we consider all the values of $c$ for which Diophantine equation $ax+by=c$ is solvable in integers, we have to add atleast $lcm(a_2,a_3,.......,a_{k+1})$ in $c$ such that if $c\in c_i$, then $c+lcm(a_2,a_3,.......,a_{k+1})\in \mathbf{c_i}$. So, if $c \in \mathbf{c_i}$ but $c \notin \mathbf{c_j}$ for any $j<i$, then $c+lcm(a_2,a_3,.......,a_{k+1})\in \mathbf{c_i}$ and $c+lcm(a_2,a_3,.......,a_{k+1})\notin \mathbf{c_j}$ for any $j<i$. It implies that if DEA-R algorithm incurs $i$ recursive calls on input $a,b,c$, i.e. $R_{a,b}(c)=i$ then $R_{a,b}(c+lcm(a_2,a_3,.......,a_{k+1}))=i$.  Thus, $lcm(a_2,a_3,.......,a_{k+1})$ is a period of the function $R_{a,b}$. 
\par Now, we consider the unsolvable instances of $a,b,c$. If a $c$ does not belong to any sets $\mathbf{c_1}, \mathbf{c_2},.....,\mathbf{c_k}$, then $ax+by=c$ has no integer solutions. On the non-solvable instances of $a,b,c$, the DEA-R algorithm incurs $k+1$ recursive calls, i.e. $R_{a,b}(c)=k+1$. We can say that, $$R_{a,b}(c+lcm(a_2,a_3,.......,a_{k+1}))=k+1$$ Suppose if this is not correct, then $c+lcm(a_2,a_3,.......,a_{k+1})$ must belong to one of the sets, $\mathbf{c_1}, \mathbf{c_2},.....,\mathbf{c_k}$. So, we will be able to write $c+lcm(a_2,a_3,.......,a_{k+1})=a_i+Q_ia_{i+1}$ for any integer $Q_i$. Thus, we will be able to write $c$ also as the similar linear combination of integers $a_i$ and $a_{i+1}$. Consequently, $c$ must belong to the same sets, to which  $c+lcm(a_2,a_3,.......,a_{k+1})$ belong. So, this is a contradiction and for unsolvable instances of $a,b,c$ also, $R_{a,b}(c)=R_{a,b}(c+lcm(a_2,a_3,.......,a_{k+1}))$. 
\par Hence, we proved that $lcm(a_2,a_3,.......,a_{k+1})$ is a period of the function $R_{a,b}$. But, to prove that this is the fundamental period, we have to consider the case when there is a set $\mathbf{c_i}$, such that for all $c \in \mathbf{c_i}$, there is a set $\mathbf{c_j}$ such that $c \in \mathbf{c_j}$, where $j<i$. In this case, the period of $R_{a,b}$ may be smaller than $lcm(a_2,a_3,.......,a_{k+1})$, because $R_{a,b}$ will never be equal to $i$, so that we can divide the $lcm(a_2,a_3,.......,a_{k+1})$ by $a_{i+1}$ to get a smaller period. We will prove that this scenario is not possible for $i<k$ i.e. there exists atleast one value in each of the sets $\mathbf{c_1},\mathbf{c_2},..., \mathbf{c_{k-1}}$ which does not belong to any smaller indexed set. Specifically, we will prove that for all $i<k$, there exists a $c$ with $c \in \mathbf{c_i}$ but $c \notin \mathbf{c_j}$ for all $j<i$. After this we will see that for set $\mathbf{c_k}$, we do not need an integer which belongs to only the set $\mathbf{c_k}$.    
    \par  Now we will prove that, for all $i<k$, $c=a_{i+2}$ belongs to $\mathbf{c_i}$ but it does not belong to $\mathbf{c_j}$ for all $j<i$. Consider the integer $$c=a_{i+2}$$
$$\implies c = a_i-q_ia_{i+1}$$ 

where $q_i$ (defined in notations section) is the quotient of division $\frac{a_i}{a_{i+1}}$.
$$\implies c \in \mathbf{c_i}$$
because all the integers belonging to $\mathbf{c_i}$ are in the form $a_i+Q_ia_{i+1}$ (see notations Section).
Suppose that $a_{i+2} \in \mathbf{c_j}$ for some $j<i$, then 
\begin{equation}  \label{perEq1}
a_{i+2}=a_j+Q_{j}a_{j+1}    
\end{equation}

for an integer $Q_j$, which is positive, negative or $0$. 
\par $j<i$ implies that  $a_i<a_j$ and since $a_{i+2}<a_i$, so $a_{i+2}<a_j$ and $a_{i+2}<a_{j+1}$. Since $a_j$ and $a_{j+1}$ are positive integers, for Equation (\ref{perEq1}) to hold, $Q_j$ must be negative. We write equation \ref{perEq1} as follows:
\begin{equation}\label{perEq2}
    a_{i+2}-Q_ja_{j+1}=a_j
\end{equation}
Since $-Q_j$ is a positive integer and $a_{i+2}<a_{j+1}$, $a_{i+2}$ is the remainder of the division, $\frac{a_j}{a_{j+1}}$. It is a contradiction because we know that $a_{i+2}$ is the remainder of the division, $\frac{a_i}{a_{i+1}}$. Thus for all $i<k$, $a_{i+2}\in \mathbf{c_i}$ but $a_{i+2} \notin \mathbf{c_j}$ for all $j<i$. 
\par Suppose that we have a value in $\mathbf{c_k}$ which belongs to only $\mathbf{c_k}$ and no other sets $\mathbf{c_j}$ such that $j<k$. Then the fundamental period of $R_{a,b}$ will be $lcm(a_2,a_3,.....,a_{k+1})$. Now, assume that all the values in set $\mathbf{c_k}$ belong to any of the sets $c_j$ for some $j<k$. Then fundamental period of $R_{a,b}$ will be $lcm(a_2,a_3,......,a_{k})$. Since $a_{k+1}$ is the $gcd$ of $a_2$ and $a_3$, $lcm(a_2,a_3,......,a_{k+1})=lcm(a_2,a_3,.......,a_k)$. Thus the fundamental period of function $R_{a,b}$ is equal to $lcm(a_2,a_3,......,a_{k+1})$.
\end{proof}

The analysis on the number of recursive calls in the DEA-R algorithm is also applicable to the number of iterations of the first while loop in the DEA-I algorithm.

 \section{Average case analysis of DEA-R}\label{secACAD}
 
In the average case, number of steps in Euclidean GCD algorithm and binary gcd algorithm is $O(\log n)$ \cite{knuth2014art}. They can be extended to solve two-variable linear Diophantine equations. Now, we do the average case analysis of the DEA-R algorithm. 
 \par The DEA-R algorithm has three input parameters, given as $a,b$, and $c$. 
The number of recursive calls for executing the DEA-R algorithm at an input triplet ($a,b,c$) depends on all these parameters. We select the third parameter $c$ of the algorithm to perform the average case analysis of the number of recursive function calls incurred by the algorithm.  So, we consider all the possible integer values of $c$ between $-\infty$ and  $+\infty$ for finding the average number of recursive function calls. We perform the analysis only on $c$, because we can find the average case analysis of Euclid's algorithm (based on $a$ and $b$) in \cite{dixon1970number}, \cite{heilbronn1969average} and \cite{hensley1994number}. We can combine the analysis of Euclid's algorithm with ours to arrive at an analysis that considers all the parameters. Specifically, we can use the average value found by \cite{dixon1970number}, \cite{heilbronn1969average} and \cite{hensley1994number} instead of $\log b$, wherever we use $\log b$ in our analysis. First, we will explain our analysis using an example, and then we will generalize it.

\par The Execution of the DEA-R algorithm for different values of $c$ but fixed values of $a$ and $b$ exhibits the advantage of the DEA-R algorithm over other existing algorithms. As shown in the previous section, the DEA-R algorithm may generate a solution of a  Diophantine equation using only one recursive function call, but we can not know a priori the $c$ value for which it will take 1 recursive call. To foretell any information about the number of recursive calls by the DEA-R algorithm on an input  $(a,b,c)$, it is required to run Euclid's algorithm on the input $(a,b)$. After this, we get all the intermediate values of $(a,b)$, then we can find the specific probabilities, with which the DEA-R algorithm will solve the equation $ax+by=c$  in $i$ recursive calls. For example, consider the equation 
\begin{equation}
\label{egeq1}    1759x+550y=c
\end{equation}  Running  Euclid's algorithm on $(1759,550)$ will follow the steps $(1759,550) \Rightarrow (550,109) \Rightarrow (109,5)\Rightarrow (5,4)\Rightarrow(4,1)$. If $c=1759+550Q_1$, then the DEA-R algorithm will solve the equation \ref{egeq1} in one function call because $c \in \mathbf{c_1}$.  If we assume that $c$ takes values between 1 to a positive integer $L$, the probability that the algorithm will take one recursive call will be
$$P(R_{1759,550}(c)=1)=\frac{\frac{L}{550}+1}{L}$$
This is because $1759=550\times 3+109$, which implies
$$1 \le 1759+ 550Q_1 \le L \implies \left\lceil \frac{1-109}{550} \right\rceil \le Q_1 \le \left \lfloor \frac{L-109}{550}\right\rfloor$$
Since $109 < 550$, so $\lfloor\frac{L-109}{550}\rfloor=\lfloor\frac{L}{550}\rfloor$. Therefore, $0 \le Q_1 \le \lfloor\frac{L}{550} \rfloor$ or $Q_1$ can take $\lfloor\frac{L}{550} \rfloor+1$ i.e. $\frac{L}{550}+1$ \newline($L=lcm(550,109,5,4)$) values between 1 and $L$.

Function $R_{a,b}$ is defined in the previous section. $R_{a,b}(c)$  represents the number of recursive calls by the function $f$ of DEA-R on input $(a,b,c)$. 
 By theorem \ref{theo10}, fundamental period of $R_{1759,550}(c)$ is $lcm(550,109,5,4)$. Since after the fundamental period, the value of the function $R_{1759,550}$ repeats, we assume that $L=lcm(550,109,5,4)$. We have  proved formally that we can take the average over only fundamental period of $R_{a,b}$ to find the average over the entire number line (Appendix \ref{EVNRCDA}).
The DEA-R algorithm will take 2 recursive calls for all $c \in \mathbf{c_2}\setminus \mathbf{c_1}$. As explained in section 2, $\mathbf{c_1}$ and $\mathbf{c_2}$ are the sets of values of $c$, which can be written as $1759+550Q_1$ and $550+109Q_2$ ($Q_1$ and $Q_2$ are integers) respectively. If $c$ belongs to both the sets $\mathbf{c_1}$ and $\mathbf{c_2}$, then the DEA-R algorithm will take only one recursive call, not two. So, we find the cardinality of set $\mathbf{c_1}\cap \mathbf{c_2}$ or solve for the number of integral solutions of $1759+550Q_1=550+109Q_2$. After solving this equation using extended Euclid's algorithm or DEA-R algorithm we can see that the general solution of this equation will be $(220+109m,1099+550m)$ or $(111+109m,549+550m)$. Then, we subtract the cardinality of the set, $\mathbf{c_1} \cap \mathbf{c_2}$, from the cardinality of the set $\mathbf{c_2}$. Hence
$$P(R_{1759,550}(c)=2)=\frac{1}{L}\left(\frac{L}{109}+1-\left(\frac{L}{550\times 109}+1\right)\right)$$ 
where $L=lcm(550,109,5,4)$.
DEA-R algorithm will take 3 recursive calls for $c \in (\mathbf{c_3}\setminus \mathbf{c_2}) \setminus \mathbf{c_1}$. To proceed, we need to compute $|\mathbf{c_3} \cap \mathbf{c_2}| + |\mathbf{c_3}\cap \mathbf{c_1}|-|\mathbf{c_3}\cap \mathbf{c_2}\cap\mathbf{c_1}|$ and then subtract this value from the total number of values of $c$ in the set, $\mathbf{c_3}$. Thus, we find the probability
$P(R_{1759,550}(c)=3)$. Similarly, we find the probabilities, $P(R_{1759,550}(c)=4)$ and $P(R_{1759,550}(c)=5)$. So we can find average number of recursive calls as follows:\begin{multline}R_{1759,550}^{avg}=1P(R_{1759,550}(c)=1)+2P(R_{1759,550}(c)=2)+\\3P(R_{1759,550}(c)=3)+4P(R_{1759,550}(c)=4)+\\5P(R_{1759,550}(c)=5)\end{multline}
Now, we generalize this analysis for any value of $a$ and $b$. In Section \ref{SecEXC}, we propose Theorem \ref{theo11}, which is on the existence of the intersection of sets ($\mathbf{c_1},.....,\mathbf{c_k}$). This theorem can be useful in generalizing the analysis of the average value of $R_{a,b}$, because we need the cardinality of multiple set intersections (for an intersection of any possible combination from $\mathbf{c_1},....,\mathbf{c_k}$).

\subsection{Average number of recursive calls}
 Between 1 to an integer $L$, let $n_1$ denote the number of values of $c$, which are in the form of $\mathbf{c_1}$. By theorem 1, the fundamental period of $R_{a,b}$ is $LCM(a_2,a_3,\ldots,a_{k+1})$. After the fundamental period of the function $R_{a,b}(c)$, its values repeat; therefore it is feasible to assume that $L=LCM(a_2,a_3,\ldots,a_{k+1})$. Since $R_{a,b}$ repeats after the fundamental period, $L=LCM(a_2,a_3,\ldots,a_{k+1})$, average of $R_{a,b}$ over entire number line is equal to its average over the fundamental period (Appendix \ref{EVNRCDA}). For $1<i\le k$, let $n_i$ denotes the number of values of $c$ which belong to the set, $\mathbf{c}_i$ and not to any of the sets, $\mathbf{c_1},\mathbf{c_2},....,$ to $\mathbf{c_{i-1}}$. 
To find $n_1$, we have to solve for $Q_1$ in: 
$$a_1+Q_1a_2=L$$

$n_1=Q_1=\frac{L}{a_2}+1$. 
To find $n_2$, we have to solve for $Q_2$ in
$a_2+Q_2a_3=L$. Then subtract the number of $c$ values in $\mathbf{c_1}$, which are counted in $Q_2$ also i.e. we solve for $c \in \mathbf{c_2}\setminus \mathbf{c_1}$. Similarly, we need to compute $n_3$ and so on up to $n_k$. Assume that the number of $c$ values, for which the Diophantine equation is not solvable, is $n'$. For these values of $c$, the DEA algorithm will incur $k+1$ recursive calls. In the first $k$ recursive calls, the algorithm finds that the value of $c$ does not belong to any of $\mathbf{c_1},...,\mathbf{c_k}$. Then, in the $(k+1)^{th}$ call, it declares that the equation is not solvable.
Then the number of recursive calls by the DEA algorithm in the average case will be as follows:

\begin{equation} \label{avgeq1}
    R_{a,b}^{avg}=\frac{1}{L}(1n_1+2n_2+3n_3+.......+kn_k+(k+1)n')
\end{equation}

Computing an asymptotic value of $R_{a,b}^{avg}$ requires an analysis of 
$a_1,a_2,.....,a_k$ in euclidean algorithm or DEA algorithm. 
We can find an upper bound on $R_{a,b}^{avg}$ by considering only the first term while evaluating $n_i$s. Along with this, if we assume that for all the values of $c$, the Diophantine equation has an integral solution, then $n'=0$.
In that case, we derive the upper bound as follows:

\begin{equation} \label{bound1}
    R_{a,b}^{avg}\le \frac{1}{L}\left(1\left(\frac{L}{a_2}+1\right)+2\left(\frac{L}{a_3}+1\right)+.......+k\left(\frac{L}{a_{k+1}}+1\right)\right)
\end{equation} 

where $L=lcm(a_2,a_3,......,a_{k+1})$. 
\par If $n' \ne 0$, then $n'=L-\frac{L}{a_{k+1}}$. $a_{k+1}$ is the $gcd$ of $a_1$ and $a_2$, so $L-\frac{L}{a_{k+1}}$ is the number of $c$ values between 1 to $L$, for which the Diophantine equation is not solvable.
Then we rewrite the inequality \ref{bound1} as follows:
\begin{multline}\label{bound2}
      R_{a,b}^{avg}\le \\\frac{1}{L}\left(1\left(\frac{L}{a_2}+1\right)+2\left(\frac{L}{a_3}+1\right)+.......+k\left(\frac{L}{a_{k+1}}+1\right)+(k+1)(L-\frac{L}{a_{k+1}})\right) 
 \end{multline}

The assumption that $L$ is a large integer leads to an exact value of the bound on $R_{a,b}^{avg}$, dependent only on inputs. We compute a tight upper bound on $R_{a,b}^{avg}$ from inequality \ref{bound2} as follows:

\begin{multline}R_{a,b}^{avg} \le \lim_{L \to \infty} \frac{1}{L}\left(1\left(\frac{L}{a_2}+1\right)+2\left(\frac{L}{a_3}+1\right)+.......+k\left(\frac{L}{a_{k+1}}+1\right)\right)+\\\frac{1}{L}\left((k+1)(L-\frac{L}{a_{k+1}}) \right)\end{multline}
$$=\frac{1}{a_2}+\frac{2}{a_3}+......+\frac{k}{a_{k+1}}+(k+1)\left(\frac{a_{k+1}-1}{a_{k+1}}\right)$$

\begin{equation}\label{bound3}
 \implies R_{a,b}^{avg} \le  \frac{1}{a_2}+\frac{2}{a_3}+......+\frac{k}{a_{k+1}}+(k+1)\left(\frac{a_{k+1}-1}{a_{k+1}}\right) 
\end{equation}
The upper bound (\ref{bound3})is the tightest upper bound on $R_{a,b}^{avg}$.
In lemma \ref{lem2}, we derive a lower bound on the values of $a_2,a_3,.....,a_{k+1}$. Using this lower bound on the values of $a_2,a_3,.....,a_{k+1}$, we derive a looser upper bound than bound (\ref{bound3}). Lemma \ref{lem2} is stated in \cite{cormen2009introduction} and its proof is given in \cite{Cormen_Sol_ref}. We prove Lemma \ref{lem2} differently than  \cite{Cormen_Sol_ref}.
\par With the aid of \cite[Lemma 31.10]{cormen2009introduction} for a lower bound on $a_1,a_2$, we find a new lower bound on these values. According to \cite[Lemma 31.10]{cormen2009introduction} 
if Euclid's algorithm incurs $k$ recursive calls on inputs $(a_1,a_2)$, then the value of $a_2$ will be at least $F_{k+1}$ and value of $a_1$ will be at least $F_{k+2}$ \cite[Lemma 31.10]{cormen2009introduction}. $F_{k+1}$ and $F_{k+2}$ are the  $(k+1)^{th}$ and $(k+2)^{th}$  fibonacci numbers respectively. Fibonacci numbers are defined by the recursion, $F_n=F_{n-1}+F_{n-2}$ and $F_{1}=0,F_2=1$. 
\newline
\begin{lem}\label{lem2}
If $a_1>a_2\ge 1$ and Euclid's algorithm performs $k$ recursive calls on the inputs $(a_1,a_2)$, then $a_1\ge gcd(a,b).F_{k+2}$ and $a_2 \ge gcd(a_1,a_2).F_{k+1}$

\end{lem}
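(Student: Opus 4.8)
The plan is to reduce the claim to \cite[Lemma 31.10]{cormen2009introduction} by dividing out the greatest common divisor. Write $g=gcd(a_1,a_2)$ (which is the same as the $gcd(a,b)$ appearing in the statement, since $(a,b)=(a_1,a_2)$ is the input pair), and set $a_1'=a_1/g$ and $a_2'=a_2/g$, so that $gcd(a_1',a_2')=1$ and $a_1'>a_2'\ge 1$. The core of the argument is the observation that Euclid's algorithm performs exactly the same number of recursive calls on the scaled pair $(a_1',a_2')$ as it does on $(a_1,a_2)$.

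To justify this, I would show that the whole sequence of intermediate values $a_3,a_4,\dots,a_{k+1}$ produced by Euclid's algorithm on $(a_1,a_2)$ is $g$ times the corresponding sequence produced on $(a_1',a_2')$. This rests on the two elementary identities $\lfloor gx/gy\rfloor=\lfloor x/y\rfloor$ and $gx \bmod gy = g\,(x \bmod y)$, valid for positive integers $x>y$: at every division step the quotient $q_i=\lfloor a_i/a_{i+1}\rfloor$ is unchanged and the remainder $a_{i+2}=a_i\bmod a_{i+1}$ is merely scaled by $g$. Proceeding by induction on the step index using the table's recurrence $a_i=q_ia_{i+1}+a_{i+2}$, if $a_1=g a_1'$ and $a_2=g a_2'$ then $a_i=g a_i'$ for every intermediate value, and in particular both runs reach the base value $0$ after the same number $k$ of calls. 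Hence Euclid's algorithm also incurs exactly $k$ recursive calls on the coprime pair $(a_1',a_2')$.

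Applying \cite[Lemma 31.10]{cormen2009introduction} to $(a_1',a_2')$ then gives $a_1'\ge F_{k+2}$ and $a_2'\ge F_{k+1}$, and multiplying through by $g$ yields $a_1=g a_1'\ge g F_{k+2}=gcd(a_1,a_2)F_{k+2}$ and $a_2=g a_2'\ge g F_{k+1}=gcd(a_1,a_2)F_{k+1}$, which is the claim. The only delicate point, and the step I expect to require the most care, is establishing the scale-invariance of the recursion count: I must verify the remainder identity $gx\bmod gy=g\,(x\bmod y)$ and thread it cleanly through the induction so that the two runs terminate simultaneously. Once that invariance is in hand, the bound follows from a single multiplication by $g$.
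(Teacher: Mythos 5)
Your proposal is correct and follows essentially the same route as the paper's own proof: divide the entire Euclid sequence by $g=gcd(a_1,a_2)$, observe that the scaled pair incurs the same $k$ recursive calls, apply \cite[Lemma 31.10]{cormen2009introduction} to the scaled pair, and multiply back by $g$. Your explicit justification of the scale invariance via $\lfloor gx/gy\rfloor=\lfloor x/y\rfloor$ and $gx \bmod gy = g\,(x \bmod y)$ is in fact somewhat more careful than the paper, which simply asserts that the division-step equations remain satisfied after dividing by the gcd.
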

\begin{proof}
   If $(a_1,a_2),(a_2,a_3),......,(a_k,a_{k+1})$ are the sequence of input arguments observed in Euclid's gcd algorithm. Note that for all $i\le k-1$ 
\begin{equation}\label{lem2eq1}
a_i=q_ia_{i+1}+a_{i+2}   
\end{equation}
    Here $q_i$ and $a_{i+2}$ are the quotients and remainders of division of $a_i$ by $a_{i+1}$ respectively.
    We know that the greatest common divisor $(gcd)$ of $a_1$ and $a_2$ is $a_{k+1}$. So we can divide all integers $a_1,a_2,......,a_{k+1}$ by $a_{k+1}$. All equations in the set of equations given by \ref{lem2eq1} are still satisfied if we divide all of them by $a_{k+1}$. So, when we pass the input arguments $(a_1/a_{k+1},a_2/a_{k+1})$ to the Euclidean algorithm, it still takes $k$ recursive calls. Consequently $a_1/a_{k+1} \ge F_{k+2}$ and $a_2/a_{k+1} \ge F_{k+1}$. Hence $a_1\ge gcd(a,b).F_{k+2}$ and $a_2 \ge gcd(a_1,a_2).F_{k+1}$.
\end{proof}

We modify the inequality \ref{bound3} using lower bound on $a_2,a_3,\\.....,a_{k+1}$ from lemma \ref{lem2}. We arrive at a tighter bound on $R_{a,b}^{avg} $ as follows
$$
R_{a,b}^{avg} \le  \frac{1}{F_{k+1}a_{k+1}}+\frac{2}{F_ka_{k+1}}+......+\frac{k}{F_1a_{k+1}}+(k+1)\left(\frac{a_{k+1}-1}{a_{k+1}}\right) 
$$

By using approximation of $F_k$ as $\frac{\phi^k}{\sqrt{5}}$, where $\phi$ is an irrational number known as golden ratio \cite{cormen2009introduction}.
$$
R_{a,b}^{avg} \le  \frac{\sqrt{5}}{\phi^{k+1}a_{k+1}}+\frac{2\sqrt{5}}{\phi^k a_{k+1}}+......+\frac{k\sqrt{5}}{\phi a_{k+1}}+(k+1)\left(\frac{a_{k+1}-1}{a_{k+1}}\right)   
$$

\begin{multline}    
 \label{bound4}
\implies R_{a,b}^{avg} \le \frac{\sqrt{5}}{\phi^{k+1}a_{k+1}}\left(1+2\phi+3\phi^2+........+k\phi^k\right)+\\(k+1)\left(\frac{a_{k+1}-1}{a_{k+1}}\right)  
\end{multline}

The inequality (\ref{bound4}) has an arithmetic-geometric progression. We rewrite the inequality after computing the sum of the arithmetic-geometric progression for $k+1$ terms.

\begin{multline}\label{bound 5}
    R_{a,b}^{avg} \le \frac{\sqrt{5}}{\phi^{k+1}a_{k+1}}\left(\frac{1-(k+1)\phi^{k+1}}{1-\phi}+\phi\frac{1-\phi^k}{(1-\phi)^2}\right)+\\(k+1)\left(\frac{a_{k+1}-1}{a_{k+1}}\right) 
\end{multline}

\begin{multline*}
=\frac{\sqrt{5}}{\phi^{k+1}a_{k+1}}\left(\frac{1-\phi-(k+1)\phi^{k+1}+(k+1)\phi^{k+2}+\phi-\phi^{k+1}}{(1-\phi)^2}\right) \\+(k+1)\left(\frac{a_{k+1}-1}{a_{k+1}}\right) 
\end{multline*}

$$
=\frac{\sqrt{5}}{a_{k+1}}\left(\frac{1}{(1-\phi)^2\phi^{k+1}}+\frac{k(\phi-1)+\phi-2}{(1-\phi)^2}\right)+(k+1)\left(\frac{a_{k+1}-1}{a_{k+1}}\right) 
$$
The summations that don't contain $k$ can be neglected.  So, we provide an upper bound for the above expression as follows.
$$
\le \frac{\sqrt{5}}{a_{k+1}}\frac{k(\phi-1)}{(\phi-1)^2}+ (k+1)\left(\frac{a_{k+1}-1}{a_{k+1}}\right) 
$$
i.e.
\begin{equation}\label{bound6}
    R_{a,b}^{avg}\le  \frac{\sqrt{5}k}{{(\phi-1)}a_{k+1}}+(k+1)\left(\frac{a_{k+1}-1}{a_{k+1}}\right) 
\end{equation}

We use approximate values of the golden ratio ($\phi$) and $\sqrt{5}$ to write the above inequality as

$$
    R_{a,b}^{avg}\le  \frac{1.414k}{{0.618}a_{k+1}}+(k+1)\left(\frac{a_{k+1}-1}{a_{k+1}}\right) 
$$
i.e.
\begin{equation}\label{bound7}
     R_{a,b}^{avg}\le  \frac{2.28k}{a_{k+1}}+(k+1)\left(\frac{a_{k+1}-1}{a_{k+1}}\right) 
\end{equation}
 When we solve inequality (\ref{bound7}), we can see that the average number of recursive calls taken by the DEA algorithm is $O(k+1)$, i.e., $O(\log a_2)$, or $O(\log b)$. The asymptotic number of recursive calls taken by the Extended Euclid's algorithm is $\Theta(\log b)$. 
 \par In some instances of the problem, we may already know that the equation is solvable, for example, when $a$ and $b$ are coprime. When it is known that the equation is solvable, we get a constant improvement in the average number of recursions. To derive the constant, improvement we consider only $\frac{L}{a_{k+1}}$ (or $\frac{L}{gcd(a,b)}$) solvable instances of input $(a,b,c)$. So, we find the average number of recursive calls over only these $\frac{L}{a_{k+1}}$ inputs. By repeating the derivation starting from inequality (\ref{bound1}), we find that the $R_{a,b}^{avg}$ is $O(2.28k)$ or $O(2.28 (\log b - 1))$. This is a constant improvement over $\Theta(\log b)$ recursive calls by Extended Euclid's algorithm. 
 
 \subsection{An upper bound based on Fibonacci numbers} We get $2.28(\log b -1)$ as an upper bound on the value of $R_{a,b}^{avg}$. If we compute exact value rather than an upper bound, it will contain a multiplier less than 1 in place of 2.28. It will be less than 1 because for all values of $c$ the DEA-R algorithm will incur number of recursive calls which is between $1$ and $k$ (both inclusive). So, to find a more accurate bound we will consider the intersection of sets $\mathbf{c_i}$
also while computing the Equation \ref{avgeq1}. We rewrite the Equation (\ref{avgeq1}) again for the solvable instances only.
\begin{equation}
\label{avgAgain}   R_{a,b}^{avg}=\frac{1}{L}(1n_1+2n_2+3n_3+\ldots+kn_k)
\end{equation} As noted earlier, to compute $n_i$ for any $i>1$, we have to use the Inclusion-Exclusion principle. $n_i$ is the number of values of $c$ between $1$ to $L$ on which DEA-R algorithm incurs $i$ recursive calls. To compute $n_i$ we find those values of $c$ which are in $\mathbf{c_i}$ and not in any of $\mathbf{c_j}$ such that $j<i$. Any value of $c$ which is in both $\mathbf{c_i}$ and $\mathbf{c_j}$ will be in the form $a_i+Q_ia_{i+1}$ and $a_j+Q_ja_{j+1}$. If we add any multiple of $lcm(a_{i+1},a_{j+1})$ in that value of $c$, we get another value in the intersection of $\mathbf{c_i}$ and $\mathbf{c_j}$. We assume that $\mathbf{c_i}\cap \mathbf{c_j}$ is not empty. So, to find the number of values of $c$ which are in the intersection  $\mathbf{c_i}\cap \mathbf{c_j}$, we divide $L$ by $lcm(a_{i+1},a_{j+1})$. Similarly for  finding number of values in intersection, $\mathbf{c_i}\cap\mathbf{c_j}\cap \mathbf{c_l}$ (if it is not empty) we divide $L$ by $lcm(a_{i+1},a_{j+1},a_{k+1})$. Using the Inclusion-Exclusion principle, we will compute $n_i$ as follows:
$$n_i=\frac{L}{a_{i+1}}-\sum_{1<j<i+1}\frac{L}{lcm(a_{i+1},a_j)}+\sum_{1<j,k<i+1}\frac{L}{lcm(a_{i+1},a_j,a_k)}+\ldots$$
Here we have assumed all the intersections of the sets $\mathbf{c_i}$'s is non-empty.
 We use Lemma \ref{lem2} for a lower bound on the $a_i$'s, i.e.
\begin{multline}\label{bound 14}
n_i\le\frac{L}{F_{k+3-{(i+1)}}}-\sum_{1<j<i+1}\frac{L}{lcm(F_{k+3-{(i+1)}},F_{k+3-j})}+\\\sum_{1<j,l<i+1}\frac{L}{lcm(F_{k+3-{(i+1)}},F_{k+3-{j}},F_{k+3-l})}+\ldots    
\end{multline}

where $F_i$'s represent $i^{th}$ Fibonacci numbers. We use the inequality given above to rewrite the equation (\ref{avgAgain}) as the follows. When we rewrite it, we cancel out the variable $L$ which is in denominator and numerator both.
\begin{multline}\label{bound 15}
R_{a,b}^{avg}\le\frac{1}{F_{k+1}}+2\left(\frac{1}{F_k}-\frac{1}{lcm(F_k,F_{k+1})}\right)+\\3\left(\frac{1}{F_{k-1}}-\frac{1}{lcm(F_{k-1},F_k)}-\frac{1}{lcm(F_{k-1},F_{k+1})}+\frac{1}{lcm(F_{k-1},F_k,F_{k+1})}\right)
+\ldots \\+k\left(\frac{1}{F_2}-\sum_{1<j<k+1}\frac{1}{lcm(F_2,F_{k+3-j})}+
\sum_{1<j,l<k+1}\frac{L}{lcm(F_{k+3-{(i+1)}},F_{k+3-{j}},F_{k+3-l})}+\ldots\right)\end{multline}
The exact value of right hand side of the inequality (\ref{bound 15}) is a tighter bound than the bound (\ref{bound7}) on the average number of recursive calls taken by DEA-R algorithm. Note that, EEA-R algorithm takes $k+1$ recursive calls. We designed a program to compute this value for different values of $k$. We have provided the details of this computation in the Implementation and result section. 

 \subsection{Existance of an integer $c$ in the intersection of all $\mathbf{c_i}$'s}\label{SecEXC}  In the theorem \ref{theo11}, we propose a sufficient condition for the set $\mathbf{c_1} \cap \mathbf{c_2} \cap ......\cap \mathbf{c_k} \ne \phi $ and find its cardinality. Similarly to what is done in Theorem \ref{theo11}, we can construct a value of $c$ at all possible intersections among the sets $\mathbf{c_1},\mathbf{c_2},.....,\mathbf{c_k}$. We can use the cardinality of these sets to find a tighter upper bound on the inequality \ref{bound1} for the average case analysis of the DEA-R algorithm. Following the theorem \ref{theo11}, we give an example where the condition given in theorem \ref{theo11} is not satisfied but $\mathbf{c_1} \cap \mathbf{c_2} \cap ......\cap \mathbf{c_k} \ne \phi $. Thus, we find that although the condition given in theorem \ref{theo11} is sufficient, but it is not a necessary condition. We present another example of inputs, where $\mathbf{c_1} \cap \mathbf{c_2} \cap ......\cap \mathbf{c_k} = \phi $. \newline\begin{theorem} \label{theo11}
   If $gcd$'s of all two-integer pairs from  $a_1,a_2,....$ $..,a_k,a_{k+1}$ are equal,  then there exists a value of $c$, which is present in all the sets $\mathbf{c_i}$, where $1\le i \le k$. Here each $c$ value in $\mathbf{c_i}$ can be represented as $c=a_i+Q_ia_{i+1}$, for any integer $Q_i$.  $a_1,a_2,.......,a_k$ are integers, such that the inputs in the subsequent recursive function calls in Euclid's gcd algorithm  are in the following order:
$(a_1,a_2)\to (a_2,a_3) \to ...........(a_k,a_{k+1})$.
\end{theorem}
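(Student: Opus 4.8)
The plan is to recognize that membership $c \in \mathbf{c_i}$ is exactly the congruence condition $c \equiv a_i \pmod{a_{i+1}}$, so that exhibiting a common element of $\mathbf{c_1} \cap \cdots \cap \mathbf{c_k}$ is the same as solving the simultaneous system
$$c \equiv a_1 \pmod{a_2}, \quad c \equiv a_2 \pmod{a_3}, \quad \ldots, \quad c \equiv a_k \pmod{a_{k+1}}.$$
The natural tool is the Chinese Remainder Theorem, but the moduli $a_2,a_3,\ldots,a_{k+1}$ are \emph{not} coprime: already every consecutive pair shares the divisor $a_{k+1}$, since in Euclid's algorithm $gcd(a_i,a_{i+1})=a_{k+1}$ for every $i$. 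So the main obstacle is that ordinary CRT does not apply to these moduli directly, and the pairwise-gcd hypothesis is precisely what lets us factor out the obstruction.

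First I would use the hypothesis. Since consecutive pairs already satisfy $gcd(a_i,a_{i+1})=a_{k+1}$, the common value of all pairwise gcd's must equal $g:=a_{k+1}$, and in particular $g \mid a_i$ for every $i$. Writing $a_i = g\,b_i$, the equal-gcd hypothesis becomes $gcd(b_i,b_j)=1$ for all $i\neq j$, i.e.\ the $b_i$ are pairwise coprime. Because each modulus $a_{i+1}=g\,b_{i+1}$ and each residue $a_i=g\,b_i$ is divisible by $g$, every congruence forces $g \mid c$; writing $c=g\,d$ reduces the system to
$$d \equiv b_i \pmod{b_{i+1}}, \qquad i=1,2,\ldots,k,$$
whose moduli $b_2,b_3,\ldots,b_{k+1}$ are now pairwise coprime.

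At this point ordinary CRT applies cleanly: the reduced system has a solution $d$, unique modulo $B:=b_2 b_3 \cdots b_{k+1}$. Undoing the substitution yields a value $c=g\,d$ lying in every $\mathbf{c_i}$, which proves existence. For the cardinality I would observe that $c$ is unique modulo $g\,B$, and then check the bookkeeping identity $g\,B = lcm(a_2,a_3,\ldots,a_{k+1})$: each $a_{i+1}=g\,b_{i+1}$ divides $g\,B$, so $L \mid g\,B$; conversely, if $L$ is any common multiple of the $g\,b_{i+1}$, then writing $L=g\,L'$ gives $b_{i+1}\mid L'$ for all $i$, whence pairwise coprimality forces $B \mid L'$ and $g\,B \mid L$. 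Hence $\mathbf{c_1}\cap\cdots\cap\mathbf{c_k}$ is a single residue class modulo the period $L=lcm(a_2,\ldots,a_{k+1})$ from Theorem \ref{theo10}, so it contains exactly one value of $c$ in each full period.

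I expect the compatibility of the congruence system to be the only real content. Once the common factor $g=a_{k+1}$ is stripped off and the $b_i$ are seen to be pairwise coprime, both the existence of $c$ and the ``exactly one solution per period'' count follow from standard CRT, and the identity $g\,B = lcm(a_2,\ldots,a_{k+1})$ is the routine computation that ties the cardinality back to the period established in Theorem \ref{theo10}.
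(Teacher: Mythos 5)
Your proof is correct, but it follows a different route from the paper's. The paper applies the generalized Chinese remainder theorem for \emph{non-coprime} moduli directly: it writes the system as $c \equiv a_{j+2} \bmod a_{j+1}$ (equivalent to $c\equiv a_j \bmod a_{j+1}$ since $a_j = q_j a_{j+1} + a_{j+2}$), and invokes the solvability criterion $a_{j+2} \equiv a_{l+2} \bmod gcd(a_{j+1},a_{l+1})$, which holds under the equal-pairwise-gcd hypothesis because every $a_i$ (and $a_{k+2}=0$) is a multiple of the common gcd $a_{k+1}$. You instead reduce to the \emph{classical} coprime-moduli CRT by stripping out the common factor $g = a_{k+1}$: writing $a_i = g\,b_i$, the hypothesis makes the $b_i$ pairwise coprime, each congruence forces $g \mid c$, and the system becomes $d \equiv b_i \bmod b_{i+1}$ with $c = g\,d$. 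Both arguments are sound; the paper's is shorter if one takes the generalized CRT as a black box, while yours is more elementary and buys something the paper explicitly defers: since your solution $d$ is unique modulo $B = b_2 b_3 \cdots b_{k+1}$ and $g\,B = lcm(a_2,\ldots,a_{k+1})$, you conclude that $\mathbf{c_1}\cap\cdots\cap\mathbf{c_k}$ is exactly one residue class per fundamental period of $R_{a,b}$, i.e.\ you obtain the cardinality $|\mathbf{c_1}\cap\cdots\cap\mathbf{c_k}| = 1$ within each period of length $L$. The paper only establishes nonemptiness and remarks afterwards that computing this cardinality (needed for the exact value of $R_{a,b}^{avg}$ in equation (\ref{avgeq1})) requires counting solutions of the congruence system, so your version actually closes that gap under the theorem's hypothesis.
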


\begin{proof}
Let $a_{k+2}=0$, which is the remainder when we divide $a_k$ by $a_{k+1}$. We use the Chinese remainder theorem to prove the following:
\par
For all
$1\le j < l \le k$

\begin{equation}\label{sysEq1}
\begin{split}
 c & \equiv a_{j+2} \bmod a_{j+1}
\\
c &\equiv a_{l+2} \bmod a_{l+1}
\end{split}
\end{equation}
The system of congruences given by (\ref{sysEq1}) is solvable implies that 
we have a $c$ which belongs to all the sets $\mathbf{c_i}, \,(1\le i \le k)$. If we have an integer $c$, such that $c \equiv a_{i+2} \bmod a_{i+1}$, then $c=q_i'a_{i+1}+a_{i+2}$
$$\implies c= q_i'a_{i+1}+a_i-q_ia_{i+1}$$
where $\left\lfloor {\frac{a_i}{a_{i+1}}}\right\rfloor =q_i$. 
By the Chinese remainder theorem, the system of congruences (\ref{sysEq1}) has a solution if and only if the following congruences can be solved:

\begin{equation} \label{sysEq1_IffCond}
a_{j+2} \equiv a_{l+2} \bmod gcd(a_{j+1},a_{l+1})
\end{equation}
for all
$1\le j < l \le k$.
\par Since every two-integer pairs from $a_1,a_2,.....,a_{k+1}$ have the same $gcd$, the system of congruences given by (\ref{sysEq1_IffCond}) is solvable. Therefore, we have a $c$ which is present in all the sets $\mathbf{c_i}$, where $1 \le i \le k$.

\end{proof}
To compute the cardinality of the set, $\mathbf{c_1} \cap \mathbf{c_2} \cap ......\cap \mathbf{c_k}$, it is required  to find the number of solutions of the system of equations given by (\ref{sysEq1}).    Similarly, we can find the cardinality of all the possible intersections of the sets in $\mathbf{c_1},\mathbf{c_2},.....,\mathbf{c_k}$. Using these values, we can find the exact values of $n_1,n_2,.......,n_k$, respectively, and use them to compute $R_{a,b}^{avg}$ in equation (\ref{avgeq1}).
\par 
In theorem \ref{theo11}, we assume that all the two-integer pairs in $a_1,a_2,.....,a_{k+1}$ have equal gcd, however, this may not always be the case. For example, if the initial inputs are $(89,55)$, then \begin{multline}
    a_1=89,a_2=55, a_3=34,a_4=21, a_5=13,\\a_6=8,a_7=5,a_8=3,a_9=2,a_{10}=1
\end{multline}

Here, the $gcd$'s of all two-integer pairs are not the same. Along with this, we can not find a value of $c$, which belongs to all the sets $c_1,c_2,.....,c_9$. To see this, consider the following Diophantine equation:

$$
89+Q_155=8+Q_65
\implies 55Q_1-5Q_6=-81
$$

It has no integral solution, because 81 is not a multiple of $gcd(55,5)$.
\par Although theorem \ref{theo11} requires an assumption on the input parameters, for the existence of a $c$ value which belongs to all the sets $\mathbf{c_i}, ( 1 \le i \le k)$, but there may be inputs which don't satisfy this assumption and still we can find the required value of $c$. For example, if we consider the input $a=1759,b=550$, then  
\begin{multline}
    a_1=1759, a_2=550, a_3=109, a_4=5, a_5=4, a_6=1
\end{multline}

For these input parameters we can find a value of $c$ which belongs to all the sets $\mathbf{c_1},\mathbf{c_2},\mathbf{c_3},\mathbf{c_4}$ and $\mathbf{c_5}$. So the assumption in theorem \ref{theo11} is a sufficient condition for the existence of $c'$, belonging to $c_i ,(1\le i \le k)$, but it is not a necessary and sufficient condition. 
\par The analysis in this section on the number of recursive calls in the DEA-R algorithm is also applicable to the number of iterations of the first while loop of the DEA-I algorithm.

 \section{Implementation and result}\label{secIR}
 We implemented the  DEA-I algorithm on a computer with a 12th-gen Intel Core i7 1.70 GHz processor.  
  We developed a C program using the GMP library \cite{GMP1} (to handle computation on large integers) for the DEA-I and EEA-I algorithms. 
We worked on three implementation tasks. In the first one, we compared the DEA-I algorithm with existing algorithms. The second implementation task depicts the superiority of the DEA-I algorithm over the Extended Euclid's algorithm. In the second implementation task, we verified that the average number of iterations by the DEA-I algorithm is always less than that by the Extended Euclid's algorithm on 100\% of the solvable inputs. For the unsolvable instances of inputs $(a,b,c)$, DEA-I algorithm required equal number of iterations to that in EEA-I algorithm. In the third implementation task we plotted the bound (\ref{bound 15}) and compare it with average number of recursive calls in EEA-R algorithm.
  \subsection{Average number of iterations in DEA-I and other algorithms}
    In the first implementation task, we compared the DEA-I algorithm with the EEA-I and the extended Euclid's algorithm given on \cite{EEAGit1}, which we refer to as EEA-2. The C language code of EEA-2 given in \cite{EEAGit1} is an implementation of a version of Extended Euclid's algorithm which is given in \cite{knuth2014art}. We compared the number of iterations taken by the first while loop in the DEA-I algorithm with the number of iterations taken by the first while loops of the other two algorithms (EEA-I and EEA-2) on the same inputs ($a,b,c$). Note that the EEA-2 algorithm does not use a stack anywhere in the program, whereas EEA-I uses a stack.  EEA-2 and EEA-I algorithms are used for solving  the Diophantine equation $ax+by=gcd(a,b)$. We extend them to solve $ax+by=c$ for any integer $c$.

We executed the C programs for DEA-I, EEA-2, and EEA-I on 100000 input triplets ($a,b,c$). For each input triplet we chose uniformly random integers ($a,b,c$) where $a$ is between $2$ and $2^{4096}$ (both inclusive), $b$ is between $1$ and $a-1$ (both inclusive) and $c$  is between $1$ and $2^{4096}$(both inclusive). For these inputs, we compared the three algorithms and found that the average number of iterations by the first while loops of the DEA-I, EEA-2, and EEA-I algorithms are 2390.2, 2392, and 2393, respectively. These figures do not show any significant improvement of our algorithm over existing algorithms, but the results of the next implementation task show it.

 \subsection{Plotting number of iterations in DEA-I and other algorithms} 
For our second implementation task, we selected two random integers $a$ and $b$ such that $1 \le b < a \le 2^{512}$. We fixed these integers, ($a,b$) and chose 100000 values of $c$ randomly such that $1\le c \le 2^{512}$. For these values of  ($a,b,c$) values, we executed DEA-I, EEA-I and EEA-2 algorithms and for first 5000 of the ($a,b,c$) values we plotted the number of iterations in each of those algorithms in figure \ref{fig3}. For the remaining inputs we got a similar kind of graph. The number of iterations in DEA-I algorithm is equal to the number of recursions in DEA-R algorithm. This equivalence between recursion and iteration is explained in section \ref{secDEAI}. To show the equivalence between number of recursions in EEA-R and number of iterations in EEA-I algorithm, in the implementation we plotted the number which is one more than the number of iterations by the first while loop of EEA-I algorithm. We do not require a recursive counterpart of the iterative EEA-2 algorithm. So, we did not modify the number of iterations by the first while loop of the EEA-2 algorithm while plotting it. In Figure \ref{fig3}, the x-axis represents serial number of a particular $c$ value. Note that the x-axis does not represent the value of $c$.
\par
The graph (Figure \ref{fig3}) depicts that the number of iterations in DEA-I algorithm is always bounded above by the number of iterations in EEA-I algorithm. The EEA-I and EEA-2 algorithms require constant number of iterations because in both of them, the number of iterations depends upon  the values of $a$ and $b$. The number of iterations by both the algorithms (DEA-I and EEA-I) is same for the inputs for which the equation has no solutions. For solvable equations the DEA-I algorithm always require less number of iterations than that in EEA-I algorithm. We have fixed values of $a$ and $b$ and vary only $c$ to observe the distributions of the function $R_{a,b}$. In section \ref{secPNRC} and section \ref{secACAD}, we saw that the function $R_{a,b}$ depends upon values of $c$ and there are specific probabilities with which $R_{a,b}$ takes a value. The graph \ref{fig3} depicts that the probability of getting a lower value of $R_{a,b}$ is less and probability of getting a higher value of $R_{a,b}$ is high. This distribution of $R_{a,b}$ matches with our theoretical result
on the probability in section \ref{secACAD}. The value of $R_{a,b}$ can be as low as $1$ also (for $c \in \mathbf{c_1}$), but its probability will be the lowest. So in the implementation, we did not observe such a value of $c$ for which $R_{a,b}=1$. We also did not observe many other values,  lower than $302$. This was the average case performance on the implementation of the DEA-I algorithm against the Extended Euclid's algorithm when we vary only $c$. 

\begin{figure}[H]
     \centering
     \includegraphics[width=1\linewidth,height=18cm]{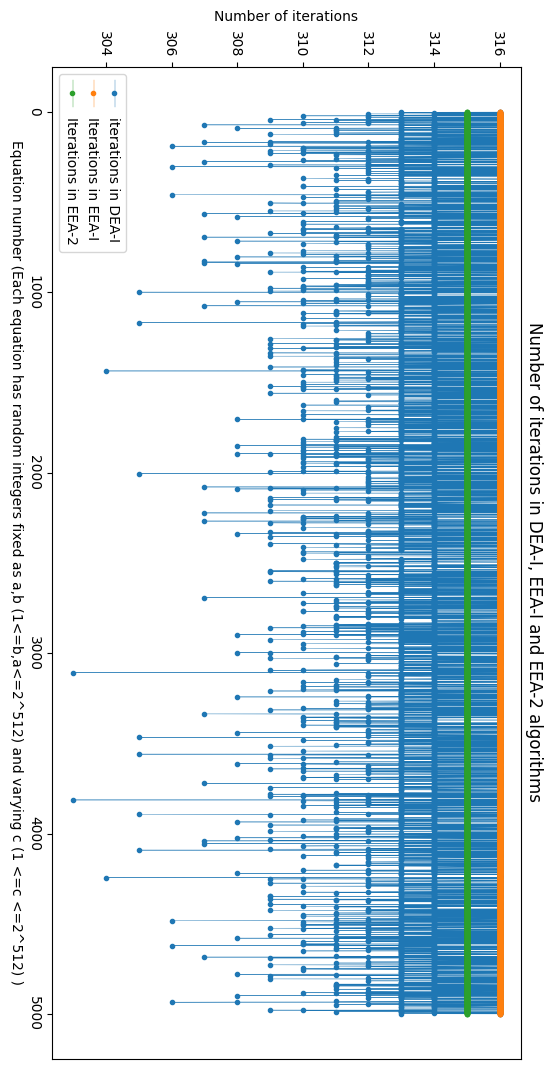}
     \caption{Number of iterations in DEA-I, EEA-I and EEA-2 algorithms on $512-$bit random inputs ($a,b,c$) such that $a$ and $b$ is fixed and $c$ is varying.}
     \label{fig3}
 \end{figure}

\par  We plotted another graph (Figure \ref{fig4}) for random inputs $a,b,c$, where $1 \le b <a\le 2^{512}$ and $1 \le c \le 2^{512}$. Now all the input parameters $a,b,c$ are varying. We took 100000 such random inputs and found the number of iterations in the DEA-I and EEA-I algorithms. For a clear depiction, we plotted the number of iterations for only first 1000 values of $(a,b,c)$. For the remaining values, we got a similar graph. The x-axis represents serial numbers of the Diophantine equations. To increase the visibility in graph, we did not plot number of iterations in EEA-2 algorithm. EEA-2 is an iterative Extended Euclid's algorithm and we do not need to map its iterations to recursions. So,  Number of iterations in EEA-2 algorithm is consistently 1 less than that in the EEA-I algorithm. In this implementation, we found that the maximum difference between the number of iterations in DEA-I and EEA-I algorithms is 10.  

\begin{figure}[H]
     \centering
     \includegraphics[width=1\linewidth,height=18cm]{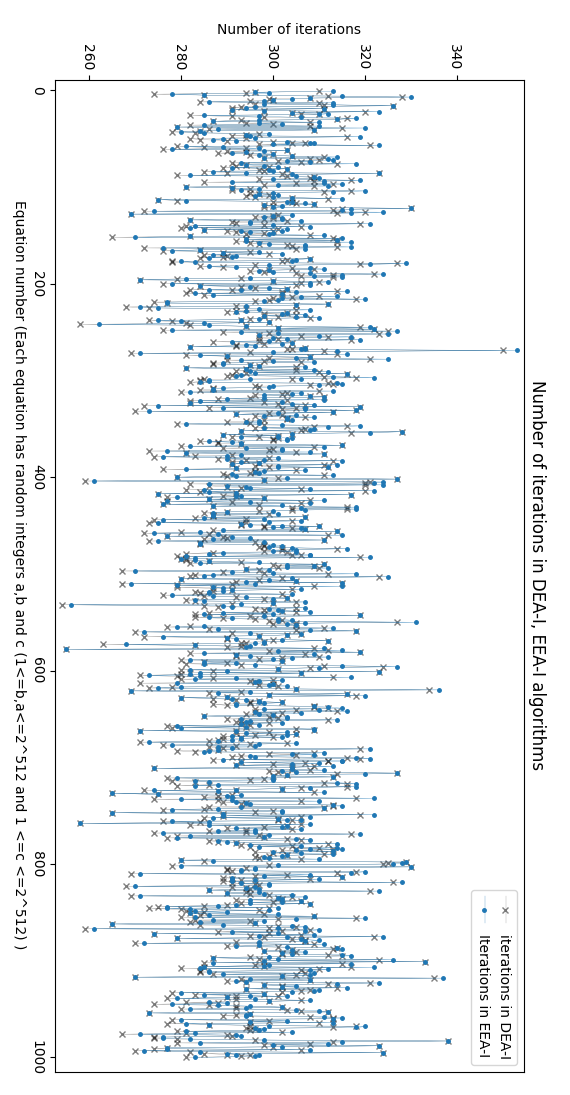}
     \caption{Number of iterations in DEA-I, EEA-I algorithms on $512-$bit random inputs $(a,b,c)$ such that all parameters $a$, $b$ and $c$ is varying.}
     \label{fig4}
 \end{figure}

 \subsection{Theoretical average of number of  recursions in DEA-R and EEA-R algorithm}
 In another implementation task, we designed a C program to compute bound (\ref{bound 15}) against values of $k=1$ to upto $k=33$. With increase in value of $k$, the number of $lcm$ terms increase exponentially. There are exactly $2^k-k-1$, $lcm$ terms in the bound (\ref{bound 15}), which makes it inefficient to compute it for larger $k$. So, we could not compute the bound after $k=33$.  We plotted the bound (\ref{bound 15}) against values of $k+1$ i.e. $\log b$. The plot is shown in the Figure \ref{fig:bound18}. The Bound (\ref{bound 15}) represents a tight theoretical bound on the average number of recursions in DEA-R algorithm on a given  bit - size of $b$. In EEA-R algorithm the average number of recursive calls will be equal to $\log b$. In our implementation we found that the number of recursive calls by DEA-R algorithm is atleast 4.5 less than that by EEA-R algorithm, when the bit - size of $b$ is more than 12.
\begin{figure}[H]
    \centering
    \includegraphics[width=1\linewidth]{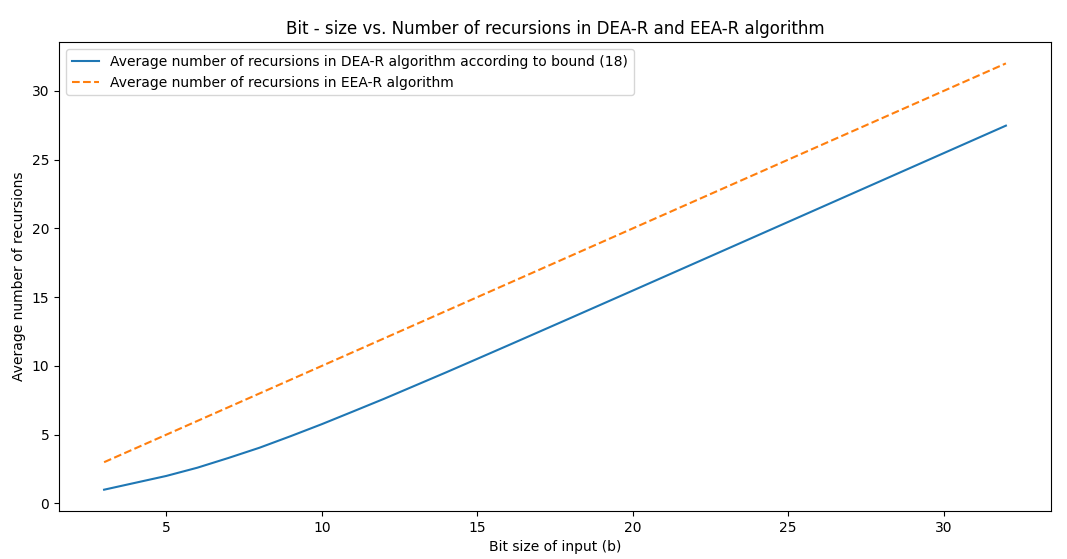}
    \caption{Average number of recursions in DEA-R and EEA-R algorithms with increase in bit - size of $b$}
    \label{fig:bound18}
\end{figure}
 
 \section{Conclusions and Discussions}\label{secCon}
 We found that the number of recursive calls in the DEA-R algorithm or the number of iterations in the first while loop of the DEA-I algorithm on an input $a,b,c$, i.e., $R_{a,b}(c)$ is periodic with respect to $c$. The period is $lcm$ of all the distinct arguments to the recursive function call except $a_1$. The period is used in the average case analysis of DEA-R or DEA-I.  We can try to find the period of $R_{a,b}$ for larger values of $k$ ($k\to \infty$), similar to a result given in  \cite{sanna2022lcm}.   The analysis of periodicity of the function $R_{a,b}$  yields the distribution of the cost of the DEA-R or DEA-I algorithm in the average case. The distribution of cost also depicts the distribution of general forms of $c$ (in the sets, $\mathbf{c_1},\mathbf{c_2},.......,\mathbf{c_k}$). But by this analysis, we are unable to find the exact size of the sets, $\mathbf{c_1},\mathbf{c_2},.......,\mathbf{c_k}$. To know the exact cardinality of these sets, we can use Theorem \ref{theo11} and construct similar sets of theorems. Theorem \ref{theo11} gives us a constructive proof of the existence of a $c$ value, which is present in all the sets, $\mathbf{c_i} \,\,\,\, 1\le i \le k$, under some constraints on the pairwise gcd of intermediate inputs.  In future work, we can focus on another case, when the integers do not have their pairwise gcd equal. 
 The theorem \ref{theo11}  can be further improved by the techniques from operator theory as done for the analysis of Euclid's algorithm in \cite{baladi2005euclidean}. Theorem \ref{theo11} is applicable in finding a closed form expression for average case complexity instead of inequality (\ref{bound2}). 
 
 \par We found multiple bounds on the average number of recursive calls taken by the DEA-R algorithm. We find that the average number of recursive calls for solving the Diophantine equation $ax+by=c$ by the DEA-R algorithm is bounded above by $\theta(\log b)$. For finding a tighter bound, we assume that it is already known to us that the Diophantine equation $ax+by=c$ is solvable. In this case, we fix $a$ and $b$ and find the average of the number of recursive calls over all the values of $c$, which correspond to solvable instances of ($a,b,c$). We find that the DEA-R algorithm takes less than or equal to $2.28 (\log b-1)$ steps. We found another improvement on this bound (bound (\ref{bound 15})), which is related to Fibonacci numbers. All the bounds show that DEA-R algorithm shows a constant improvement over the time complexity of the Extended Euclid's algorithm for solvable instances of inputs.  
 \par The implementation results show that the average number of recursive calls for solving the Diophantine equation $ax+by=c$ by the DEA-R algorithm is less than the existing algorithms. We also plot the bound (\ref{bound 15}) on the average number of recursive calls. It shows that the average value for DEA-R algorithm is less than that for EEA-R algorithm, when bit-size of $b$ is more than 12.  
 \par  It is possible that for the same inputs, the arithmetic operations take more time in the DEA-R algorithm than in existing algorithms, depending on the size of operands inside them. So in spite of the DEA-R or DEA-I algorithm incurring less number of recursive calls or fewer iterations, respectively, the difference between arithmetic operation execution times in the individual recursive calls or iterations can result in the DEA-R or DEA-I algorithm being less efficient than existing algorithms. For example, if the bit size of $c$ is much higher than the bit size of $max(a,b)$, then the time complexity of the DEA-R algorithm or DEA-I algorithm is dominated by $\log c$ instead of $\log(max(a,b))$. In future work, it is possible to reduce the cost of arithmetic operations inside each recursion.     
 \par For computing modulo multiplicative inverse, DEA-R and DEA-I algorithms may not be efficient if we directly call $f(a,b,1)$, but we can design a randomized reduction algorithm for modulo multiplicative inverse. For this, we select two integers, $x'$ and $y'$, randomly and compute $ax'+by'$. If $ax'+by'=c'$, then we use the DEA-R or DEA-I algorithm to solve $ax+by=c'+1$ for $x$ and $y$. The efficiency of the reduction depends upon the value of $c$, which we use for solving $ax+by=c'+1$.  DEA-R and DEA-I algorithms are most efficient when $c$ is in the general form of $\mathbf{c_1}$. 
 
 \par It can be an intriguing area of research to develop an algorithm that randomly selects those values of $x',y'$ in polynomial time, such that $ax'+by'=c_1$ and with high probability $c_1+c$ belongs to set $\mathbf{c_1}$. We apply the DEA-R algorithm to solve the equation $ax+by=c_1+c$. This may lead to developing a probabilistic constant-time algorithm for finding solution of the Diophantine equation $ax+by=c$. Another direction of research may be to find the greatest common divisor ($gcd(a,b)$) along with the solution of  Diophantine equation.  

\par Like any other work, our work also has some limitations. The analysis done in Section \ref{secACAD} does not produce the exact value of $R_{a,b}^{avg}$, but produces upper bounds on it. Theorem \ref{theo11} does not tell us how to find the number of values of $c$ which are present in all the sets $\mathbf{c_i}$, where $1 \le i \le k$, which is needed to find the exact value of $R_{a,b}^{avg}$. 
A constraint (pairwise $gcd$ is equal) in Theorem \ref{theo11} is imposed on $a_1,a_2,.....,a_{k+1}$ for theorem \ref{theo11} to hold. We do not give an estimate of the fraction of random inputs on which this constraint is satisfied. In the near future, we would like to work on these issues also.

%% The Appendices part is started with the command \appendix;
%% appendix sections are then done as normal sections

%% If you have bibdatabase file and want bibtex to generate the
%% bibitems, please use
%%

%% else use the following coding to input the bibitems directly in the
%% TeX file.

% \begin{thebibliography}{00}

% %% \bibitem{label}
% %% Text of bibliographic item

% \bibitem{}

%%===========================================================================================%%
%% If you are submitting to one of the Nature Portfolio journals, using the eJP submission   %%
%% system, please include the references within the manuscript file itself. You may do this  %%
%% by copying the reference list from your .bbl file, paste it into the main manuscript .tex %%
%% file, and delete the associated \verb+\bibliography+ commands.                            %%
%%===========================================================================================%%

%\bibliography{sn-bibliography}
\bibliography{sn-bibliography}
%%%%%%%%%%%%%%%%%%%%%%%%%%%%%%%%%%%%%%%%%%%%%%%%%%%%%%%%%%%%%%%%%%%%%%

\begin{appendices}

\section{Expected value of number of recursive calls by DEA-R algorithm}\label{EVNRCDA}
Let on the entire number line, there be $\alpha$ intervals, each of size $L$. $L$ is the fundamental period of the function $R^{avg}_{a,b}$ (Theorem \ref{theo10}). Let $I$ denotes a random variable representing the $I^{th}$ interval. $I$ can take values between 1 to $\alpha$. Let $J$ is a random variable denoting the number of recursive calls by DEA-R algorithm on a value of $a$ and $b$. We can write $R^{avg}_{a,b}$ as expectation of $J$ as follows:

\begin{equation} \label{appendEq1}
 R^{avg}_{a,b}=\mathbb{E}[J]=\sum_{i=1}^{\alpha}\Pr(I=i)\mathbb{E}[J|I=i]  
\end{equation}

Since size of each interval is equal to fundamental period of $R_{a,b}^{avg}$,

$$
\mathbb{E}[J|I=i]=\mathbb{E}[J|I=i']= \mu \,\,\,(\mbox{let})
$$
for any two distinct intervals $I=i$ and $I=i'$.
Then by Equation (\ref{appendEq1}),
$$
\mathbb{E}[J]=\sum_{i=1}^{\alpha}\frac{1}{\alpha}\mu
$$
$$
\implies \mathbb{E}[J]=\frac{\alpha}{\alpha}\mu
$$
$$
 R^{avg}_{a,b}(c)=\mathbb{E}[J]=\mathbb{E}[J|I=i]=\mu
$$
for any interval $I=i$.

\end{appendices}

%%===========================================================================================%%
%% If you are submitting to one of the Nature Portfolio journals, using the eJP submission   %%
%% system, please include the references within the manuscript file itself. You may do this  %%
%% by copying the reference list from your .bbl file, paste it into the main manuscript .tex %%
%% file, and delete the associated \verb+\bibliography+ commands.                            %%
%%===========================================================================================%%

% common bib file
%% if required, the content of .bbl file can be included here once bbl is generated
%%\input sn-article.bbl

\end{document}